\documentclass{article}

\usepackage{arxiv}

\usepackage[utf8]{inputenc}
\usepackage[T1]{fontenc}
\usepackage{hyperref}
\hypersetup{
  colorlinks=true,
  linkcolor=black,
  citecolor=black,
  urlcolor=blue
}
\usepackage{booktabs}
\usepackage{amsfonts}
\usepackage{nicefrac}
\usepackage{microtype}
\usepackage{graphicx}
\usepackage{amsmath,amssymb}
\usepackage{bm}
\usepackage{mathrsfs}
\usepackage{amsthm}
\usepackage{multirow}
\usepackage{float}
\usepackage{xcolor}
\usepackage{algorithm}
\usepackage[noend]{algpseudocode}

\theoremstyle{plain}
\newtheorem{theorem}{Theorem}

\theoremstyle{definition}

\newcommand{\PNS}{PNS}
\newcommand{\Sop}{S_\theta}
\newcommand{\Sloop}{S_0}
\newcommand{\Rop}{R_\theta}
\newcommand{\Frame}{F_i}
\newcommand{\Gate}{\gamma_i}
\newcommand{\Net}{\eta_\theta}
\newcommand{\Feat}{\varphi_i}
\newcommand{\hloc}{h_i}

\title{Proximity-Preserving Neural Subdivision }

\author{
  Hassan Ugail \\
  Centre for Visual Computing and Intelligent Systems \\
  University of Bradford \\
  United Kingdom \\
   \\
}

\begin{document}
\maketitle

\begin{abstract}
Classical subdivision schemes are widely used because they are local, repeatable, and analytically tractable. A single stencil defines the entire refinement rule, and the behaviour of the resulting operator under iteration is well understood. This uniformity, however, means that fixed stencils tend to underfit localised geometric features, such as sharp ridges or soft edges, where curvature is concentrated. Neural mesh refinement can adapt to such features, yet unconstrained vertex prediction usually lacks the structural behaviour required of a subdivision operator once the refinement rule is applied to its own output. In this work, we introduce Proximity-Preserving Neural Subdivision, or PNS for short. PNS is a trainable refinement rule that augments Loop subdivision with a small, bounded, curvature-gated correction expressed in a covariant local frame. The construction is designed so that, for any finite network weights, the operator is exactly equivariant under rigid motion, reproduces planar input exactly, and remains inside a quadratic proximity envelope around the Loop stencil. At planar valence-k stars, the linearised operator agrees with Loop, and it therefore inherits Loop's tangent eigenspaces and Reif spectral gap at that reference configuration. All of these properties are architectural and hold before any training takes place. Empirically, PNS improves the approximation of localised ridge features while remaining inside its prescribed proximity envelope under repeated subdivision. An unconstrained neural baseline, in contrast, achieves stronger one-step fitting but develops high-frequency artefacts and leaves the subdivision regime once iterated. The overall message of this work is that learning can be introduced into subdivision without abandoning the structural constraints that make subdivision useful as a geometry-processing primitive.
\end{abstract}

\keywords{subdivision surfaces \and neural geometry processing \and equivariance \and proximity analysis \and mesh refinement}

\section{Introduction}
\label{sec:intro}

Subdivision surfaces are useful not only because they refine meshes, but because they define repeatable geometric operators. A subdivision rule can be applied many times to its own output, its behaviour under iteration can be analysed, and its limit is controlled by the structure of the stencil. This operator property is precisely what many neural mesh refinement methods lack. A network trained to predict refined vertex positions may produce accurate one-step output, yet repeated application can amplify local errors, introduce high-frequency artefacts, or drift out of the classical subdivision regime altogether. The gap between one-step accuracy and iterated behaviour is the phenomenon that this paper addresses.

The Loop scheme \cite{Loop1987} for triangle meshes and the Catmull--Clark scheme \cite{CatmullClark1978} for arbitrary-topology quad meshes have defined what is meant by smooth refinement for over four decades, and their reliability rests on the spectral theory of subdivision matrices \cite{DooSabin1978,Reif1995,PetersReif2008}. The strength of this classical setting is also its principal limitation. Stencils are fixed in advance, so the same local rule is applied across an entire mesh regardless of local geometric content. Feature-aware surface modelling has long recognised the problem. Piecewise-smooth reconstruction \cite{Hoppe1994} responded to this problem by introducing explicit creases at the cost of leaving the smooth subdivision setting. Curvature-adaptive and feature-sensitive schemes \cite{ZorinSchroederSweldens1996,Kobbelt2000,VlachosPetersBoydMitchell2001,BoubekeurAlexa2008} address part of the gap with hand-designed alternative stencils, yet remain limited to the behaviours their authors anticipated. Neural mesh refinement \cite{LiuKimChaudhuriAigermanJacobson2020,ChenKimAigermanJacobson2023,HanockaHertzFishGiryesFleishmanCohenOr2019,HuLiuGuoCaiHuangMuMartin2022,PotamiasPloumpisZafeiriou2022} provides a different response, replacing hand-designed adaptivity by learned adaptivity, typically at the cost of the structural behaviour required for repeated application.

The method proposed in this paper, called Proximity-Preserving Neural Subdivision, formulates learned refinement as a small, structurally restricted perturbation of Loop subdivision. For each interior edge with Loop-rule edge-vertex $q_i^0 = \tfrac{3}{8}(p_a + p_b) + \tfrac{1}{8}(p_c + p_d)$, the inserted vertex is
\begin{equation}
q_i^\theta \;=\; q_i^0 \;+\; h_i^2\, \Gate\, \Frame\, \Net(\Feat),
\label{eq:operator}
\end{equation}
in which $h_i$ is the local edge length. The factor $h_i^2$ forces second-order proximity to Loop. The scalar $\Gate \in [0,1]$ is a curvature gate that suppresses corrections on planar regions. The matrix $\Frame \in \mathrm{SO}(3)$ is a covariant local frame that yields exact rigid-motion equivariance. The network $\Net$ maps intrinsic features to $\mathbb{R}^3$ with a hard bound $\|\Net(\Feat)\| \le C$.

We do not claim that \PNS{} is the most expressive one-step surface fitter. An unconstrained neural predictor can often reduce one-step geometric error more aggressively, and it does so in our experiments. We claim instead that \PNS{} is a trainable subdivision operator. It improves finite-level feature approximation while preserving rigid-motion equivariance, affine reproduction, prescribed $O(h^2)$ proximity to Loop, planar spectral inheritance, and stable repeated application. The relevant comparison is not only one-step accuracy, but whether refinement remains controlled when the rule is iterated.

The contribution of this paper has four parts. We frame learned triangle-mesh refinement as a bounded, curvature-gated perturbation of Loop subdivision rather than as direct vertex prediction. We prove that this construction is $\mathrm{SE}(3)$-equivariant, affine-reproducing, and $O(h^2)$-proximate to Loop for any finite network weights, and that at planar valence-$k$ stars its linearisation coincides with Loop's. We verify these claims numerically, stress-test them adversarially, and ablate each architectural constraint to identify which property it protects. On a localised feature benchmark, \PNS{} improves finite-level approximation while remaining inside its proximity envelope under repeated subdivision, whereas an unconstrained neural baseline improves one-step fit but fails as an iterated operator. Figure~\ref{fig:hero} previews this behaviour. The position of the work is captured in one line. Classical subdivision provides reliability without adaptivity, unconstrained neural refinement provides adaptivity without reliability, and \PNS{} provides adaptive refinement while preserving the structural regime of classical subdivision.

\begin{figure}[t]
\centering
\includegraphics[width=0.94\textwidth]{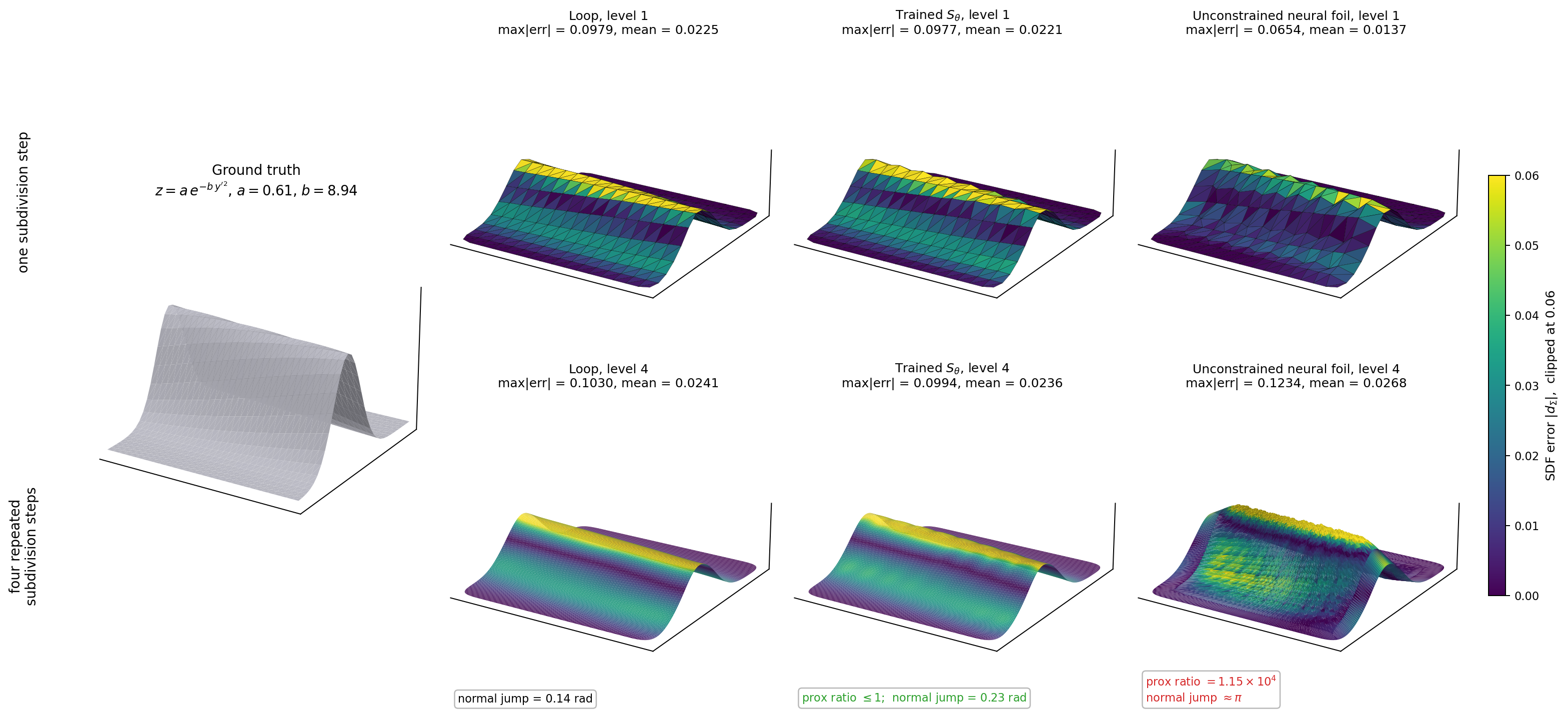}
\caption{Repeated refinement on a held-out ridge instance with parameters $a = 0.61$ and $b = 8.94$. The top row shows one subdivision step, and the bottom row shows four repeated steps. Annotation boxes report the final-level proximity ratio and the maximum face-normal jump. The unconstrained neural baseline fits aggressively at level one. By level four, however, its proximity ratio reaches $1.15 \times 10^4$, its face-normal jumps approach $\pi$, and high-frequency texture artefacts are visible. \PNS{} remains inside the architectural cap with proximity ratio at most unity, producing a stable, feature-sensitive surface across all four levels with normal jump comparable to Loop's.}
\label{fig:hero}
\end{figure}

\section{Related work}
\label{sec:related}

\subsection{Classical subdivision and adjacent constructive traditions}

The canonical approximating subdivision schemes are Loop \cite{Loop1987} for triangle meshes and Catmull--Clark \cite{CatmullClark1978} for arbitrary-topology quad meshes. Doo and Sabin \cite{DooSabin1978} introduced an early scheme together with the spectral analysis at extraordinary points that underlies subsequent theory. Modified Butterfly \cite{ZorinSchroederSweldens1996} extends the four-point interpolatory curve scheme of Dyn, Gregory, and Levin \cite{DynGregoryLevin1987} to triangle meshes. The $\sqrt{3}$-subdivision scheme of Kobbelt \cite{Kobbelt2000} offers a different refinement topology. PN-triangles \cite{VlachosPetersBoydMitchell2001} and Phong tessellation \cite{BoubekeurAlexa2008} introduce normal-driven local rules. Taubin's curvature-flow smoothing without shrinkage \cite{Taubin1995} is a canonical analytic mesh-modification rule. Exact evaluation of Catmull--Clark surfaces is due to Stam \cite{Stam1998}. Textbook treatments \cite{WarrenWeimer2002,PetersReif2008,PrautzschBoehmPaluszny2002} consolidate the theory, and the review of Dyn and Levin \cite{DynLevin2002} surveys convergence and smoothness analysis. Curvature-adaptive variants mix fixed schemes according to hand-tuned criteria, which is a discrete switch rather than a learned correction.

A complementary constructive paradigm is the partial-differential-equation surface tradition, which produces controlled limiting geometry by solving a boundary-value problem rather than by iterating a local mask. It was introduced by Bloor and Wilson \cite{BloorWilson1989,BloorWilson1990} and later extended to interactive design, aircraft geometry, patchwise mesh approximation, and facial parameterisation \cite{UgailBloorWilson1999a,UgailBloorWilson1999b,KubiesaUgailWilson2004,AthanasopoulosUgailCastro2009,ShengSourinCastroUgail2010,ShengWillisCastroUgail2011}. Related work on harmonic and biharmonic B\'ezier surfaces \cite{MonterdeUgail2004,MonterdeUgail2006} and a survey of the field \cite{GonzalezCastroUgailWillisPalmer2008} provide additional context. The tension between local adaptivity and structural safety arises in that tradition through a different construction, and the present work stays inside the subdivision tradition and asks how a local operator can be learned while retaining the structural properties that make subdivision analysable.

\subsection{Proximity analysis of nonlinear subdivision}

Proximity analysis, due to Wallner and Dyn \cite{WallnerDyn2005}, provides a general bridge between linear and nonlinear or projection-based subdivision schemes. If a scheme $T$ is $O(h^2)$-proximate to a linear scheme $S$, then $T$ inherits the regular-region $C^1$ convergence of $S$ under standard hypotheses. Grohs \cite{Grohs2009,Grohs2010} extended these ideas to manifold-valued subdivision and to the stability of multiscale transformations. H\"uning and Wallner \cite{HueningWallner2019,HueningWallner2022} treat convergence on Riemannian manifolds. A recent Heisenberg-group four-point construction \cite{UgailCarriazo2026} shows how sharply the proximity hypothesis matters. Its group-law correction is only linearly close to a Euclidean reference, and the central-channel limit is Zygmund but not $C^1$. Departing from the proximity regime by even one order can therefore destroy limit smoothness. Our analysis follows this proximity viewpoint, and our contribution is architectural. The network is parameterised so that the proximity hypothesis used by Wallner and Dyn is satisfied by construction, and the bounded $h_i^2$-scaled correction in Equation~\eqref{eq:operator} guarantees $O(h^2)$ proximity for any network weights.

\subsection{Neural mesh refinement and equivariant geometric learning}

The most direct point of comparison is Neural Subdivision \cite{LiuKimChaudhuriAigermanJacobson2020}, which trains a network to predict refined vertex positions on a Loop-style refined topology. Neural Progressive Meshes \cite{ChenKimAigermanJacobson2023} learns coarse-to-fine mesh representations, building on the progressive-meshes line of work \cite{Hoppe1996}. SubdivNet \cite{HuLiuGuoCaiHuangMuMartin2022} uses Loop-subdivision sequence connectivity to define a mesh convolutional network. MeshCNN \cite{HanockaHertzFishGiryesFleishmanCohenOr2019} introduces an edge-based mesh convolution, and Neural Mesh Simplification \cite{PotamiasPloumpisZafeiriou2022} learns coarsening rather than refinement.

Our work differs from these in a fundamental way. Prior neural refinement methods predict vertex positions or unbounded vertex offsets freely. They can fit a target surface aggressively in one step, and yet they are not generally equivariant by construction, do not reproduce affine input, and have no fixed proximity envelope to a classical scheme. Methods such as SubdivNet use subdivision connectivity for representation learning rather than learning a subdivision operator. The present work is different in kind. We learn a parametrised member of a constrained operator class whose structural behaviour is fixed before training. Table~\ref{tab:distinction} states this distinction concisely. Neural Subdivision merits a more detailed contrast because it is the closest match in aim. It predicts a vertex displacement directly from a learned mesh representation, with no architectural bound on displacement magnitude, no built-in equivariance under rigid motion, and no built-in proximity to a classical Loop stencil. \PNS{} differs in three ways. The learned correction in Equation~\eqref{eq:operator} is bounded above by $C h_i^2$ for any finite network weights, so \PNS{} is $O(h^2)$-proximate to Loop by construction. The correction is expressed in a covariant local frame and depends only on rigid-motion-invariant features, so the operator is exactly $\mathrm{SE}(3)$-equivariant for any weights. The curvature gate vanishes on planar input together with its first derivative, so \PNS{} reproduces Loop pointwise on planar input and inherits Loop's linearisation at planar valence-$k$ stars.

\begin{table}[t]
\centering
\caption{The distinction in kind between \PNS{} and prior neural refinement. \PNS{} learns a parametrised member of a constrained operator class rather than an unconstrained mapping.}
\label{tab:distinction}
\begin{tabular*}{\textwidth}{@{\extracolsep{\fill}}p{0.44\textwidth} p{0.48\textwidth}@{}}
\toprule
\textbf{Prior neural refinement} & \textbf{\PNS{} (this work)}\\
\midrule
predicts vertex positions or offsets freely & predicts bounded, gated corrections to a classical stencil\\
optimised for one-shot fitting & designed to remain inside a proximity envelope across iterations\\
generally not $\mathrm{SE}(3)$-equivariant by construction & exactly $\mathrm{SE}(3)$-equivariant by construction\\
no proximity envelope to a classical scheme & $O(h^2)$-proximate to Loop for any weights\\
no spectral inheritance & exact planar spectral inheritance at valence-$k$ stars\\
neural upsampler & trainable subdivision operator\\
\bottomrule
\end{tabular*}
\end{table}

Intrinsic features and covariant local frames also have a substantial history in geometric deep learning. Geodesic convolutional networks \cite{MasciBoscainiBronsteinVandergheynst2015} introduced patch-based intrinsic learning on manifolds. Gauge-equivariant convolutional networks \cite{CohenWeilerKicanaogluWelling2019} and gauge-equivariant mesh convolutional networks \cite{deHaanWeilerCohenWelling2021} make the local-frame structure explicit and design networks that are equivariant under frame changes. SplineCNN \cite{FeyLenssenWeichertMueller2018}, PointNet and PointNet++ \cite{QiSuMoGuibas2017,QiYiSuGuibas2017}, DGCNN \cite{WangSunLiuSarmaBronsteinSolomon2019}, and ChebNet \cite{DefferrardBressonVandergheynst2016} provide further baselines and formulations. Our use of equivariance is narrower and more direct than in these works. We do not learn equivariance. We hard-code rigid-motion equivariance by computing intrinsic features and expressing the network output in a covariant local frame. The equivariance error of the resulting operator is at machine precision for any weights, without any equivariance loss term.

\section{Method}
\label{sec:method}

\subsection{Loop background and proximity}
\label{sec:loop}

We work with oriented manifold triangle meshes $M = (V, E, F)$ in the standard polygon-mesh-processing setting of Botsch et al.\ \cite{BotschKobbeltPaulyAlliezLevy2010}. Vertex positions are denoted $p_i \in \mathbb{R}^3$. The mesh size is $h(M) = \max_{e \in E}\|e\|$, and the local edge length at edge $e_i = (v_a, v_b)$ at level $\ell$ is the pre-refinement Euclidean length $\hloc = \|p_b^\ell - p_a^\ell\|$. All proximity ratios at level $\ell$ are computed relative to this length. For an interior edge $e_i$ with opposite vertices $v_c, v_d$, Loop subdivision \cite{Loop1987} inserts the new edge vertex
\begin{equation}
q_i^0 \;=\; \tfrac{3}{8}(p_a + p_b) + \tfrac{1}{8}(p_c + p_d),
\label{eq:loop-edge}
\end{equation}
and updates an old vertex $p_j$ with $1$-ring neighbours $\mathcal{N}(j)$ and valence $k$ to the smoothed position
\begin{equation}
p_j^0 \;=\; (1 - k\beta_k)\, p_j \;+\; \beta_k \sum_{\ell \in \mathcal{N}(j)} p_\ell, \qquad
\beta_k \;=\; \tfrac{1}{k}\!\left[\tfrac{5}{8} - \!\left(\tfrac{3}{8} + \tfrac{1}{4}\cos\tfrac{2\pi}{k}\right)^{\!2}\right].
\label{eq:loop-vertex}
\end{equation}
Loop subdivision is affine-invariant, $C^2$ in regular regions, and at most $C^1$ at extraordinary vertices \cite{PetersReif2008}. A refinement operator $T$ acting on the same connectivity as $S$ is $O(h^2)$-proximate to $S$ if there is a constant $C \ge 0$, independent of the mesh, such that $\|T(M) - S(M)\|_\infty \le C h(M)^2$ on every shape-regular mesh in the considered class. Proximity is the standard tool by which nonlinear or projection-based subdivision schemes inherit convergence from a linear reference scheme \cite{WallnerDyn2005,Grohs2009}. Near a valence-$k$ vertex, one step of Loop subdivision is a linear map on a $(k{+}1)$-dimensional space of star configurations. Reif's conditions \cite{Reif1995} require a single subdominant eigenvalue, a complex-conjugate pair of tangent eigenvalues $\lambda_t = |\lambda_t|\,e^{\pm i\,2\pi/k}$, and a strictly larger gap $|\lambda_t| - |\lambda_3|$ to subsequent eigenvalues. For Loop, $|\lambda_t| = \tfrac{3}{8} + \tfrac{1}{4}\cos(2\pi/k)$.

\subsection{PNS operator and algorithm}
\label{sec:components}

The operator $\Sop$ acts on a triangle mesh $M$ by computing a refined topology identical to Loop's $1$-to-$4$ split, applying Loop's old-vertex update from Equation~\eqref{eq:loop-vertex} to existing vertices, and inserting a learned edge vertex at every interior edge. Written as an additive perturbation of Loop, the operator takes the form
\begin{equation}
\Sop \;=\; \Sloop + \Rop, \qquad q_i^\theta \;=\; q_i^0 \;+\; \hloc^2\, \Gate\, \Frame\, \Net(\Feat).
\label{eq:operator-restate}
\end{equation}
Boundary edges receive Loop's standard boundary midpoint, so the boundary is not modified by the network. The correction $\Rop$ is built from four components, each carrying a specific structural role.

The intrinsic feature vector $\Feat = (\Feat^{\mathrm{curv}}, \Feat^{\mathrm{shape}})$ encodes the local geometry visible from edge $e_i$ in a form invariant under rigid motion. The curvature component
\begin{equation*}
\Feat^{\mathrm{curv}} = 1 - n_{f_1}\!\cdot n_{f_2}
\end{equation*}
drives the gate and vanishes exactly on planar configurations, in the spirit of the discrete differential-geometry operators of Meyer et al.\ \cite{Meyer2003,Desbrun2000}. The shape component is the four-tuple of intrinsic edge-length ratios
\begin{equation*}
\Feat^{\mathrm{shape}} = \!\Big(\log\tfrac{\|p_a-p_c\|}{\hloc},\, \log\tfrac{\|p_b-p_c\|}{\hloc},\, \log\tfrac{\|p_a-p_d\|}{\hloc},\, \log\tfrac{\|p_b-p_d\|}{\hloc}\Big).
\end{equation*}
Both feature groups are invariant under rigid motion. The curvature gate is a soft smooth indicator of curvature concentration,
\begin{equation}
\Gate \;=\; \tanh\!\big(c\,\|\Feat^{\mathrm{curv}}\|^2\big), \qquad c > 0.
\label{eq:gate}
\end{equation}
It is bounded in $[0,1]$, and both $\Gate$ and its gradient vanish at $\Feat^{\mathrm{curv}} = 0$, because the argument of $\tanh$ has a zero of order two there. The gate therefore suppresses corrections in regions where Loop should be preserved, and the linearisation of $\Sop$ at a planar valence-$k$ star coincides exactly with Loop's.

For each interior edge we construct a covariant local frame $\Frame = [T_i,\,N_i,\,B_i] \in \mathrm{SO}(3)$ from the unit edge vector $T_i = (p_b - p_a)/\|p_b - p_a\|$, the projected average of adjacent face normals $N_i = (n_{f_1} + n_{f_2} - ((n_{f_1} + n_{f_2})\cdot T_i)T_i)/\|\cdot\|$, and $B_i = T_i \times N_i$. This frame transforms covariantly under rigid motion, and combined with rigid-motion-invariant features it makes the displacement $\Frame \Net(\Feat)$ equivariant by construction. The frame is smooth on shape-regular stencils that satisfy three explicit conditions. The edge must have positive length. The two adjacent faces must not be antiparallel, so that $n_{f_1} \cdot n_{f_2} > -1 + \epsilon$ for a fixed small $\epsilon > 0$. The angle between $n_{f_1} + n_{f_2}$ and $T_i$ must be bounded away from zero, which is automatic on shape-regular stencils with bounded aspect ratio. Near-degenerate frames are handled by a deterministic fallback direction drawn from a fixed world-coordinate triad.

The learned correction is a small multilayer perceptron whose output is hard-bounded through $L_2$ normalisation,
\begin{equation*}
\Net(\Feat) \;=\; C \cdot \frac{z_\theta(\Feat)}{\max\!\big(1,\, \|z_\theta(\Feat)\|\big)}, \qquad \|\Net(\Feat)\| \le C.
\end{equation*}
The bound is enforced by construction, independent of network weights and training dynamics. The single hyperparameter $C$ controls the per-edge correction budget and, as shown in Section~\ref{sec:theory}, the spectral safety margin at valence-$k$ stars. Because the correction is scaled by $\hloc^2$, its magnitude vanishes under repeated refinement on shape-regular sequences. \PNS{} is therefore a controlled finite-level adaptive refinement rule rather than a free neural limit-surface generator, with learning used to spend a per-level $Ch_i^2$ budget where a hand-designed stencil would underfit. Algorithm~\ref{alg:pns} summarises one step. Each edge is processed independently and the network evaluation batches naturally across edges.

\begin{algorithm}[t]
\caption{One step of \PNS{} subdivision.}
\label{alg:pns}
\begin{algorithmic}[1]
\Require mesh $M = (V, F)$, budget $C$, network weights $\theta$
\For{each old vertex $p_j$}
  \State $p_j' \gets$ Loop old-vertex update from Equation~\eqref{eq:loop-vertex}
\EndFor
\For{each interior edge $e_i = (v_a, v_b)$}
  \State $q_i^0 \gets \tfrac{3}{8}(p_a + p_b) + \tfrac{1}{8}(p_c + p_d)$
  \State $\Feat \gets$ intrinsic curvature and shape features
  \State $\Gate \gets \tanh\!\big(c \|\Feat^{\mathrm{curv}}\|^2\big)$
  \State $\Frame \gets$ covariant frame at $e_i$
  \State $z \gets z_\theta(\Feat), \quad \Net \gets C \cdot z / \max(1, \|z\|)$
  \State $q_i^\theta \gets q_i^0 + \hloc^2 \Gate \Frame \Net$
\EndFor
\For{each boundary edge}
  \State apply Loop boundary midpoint
\EndFor
\State apply $1$-to-$4$ refinement to $F$
\State \Return refined mesh $M'$
\end{algorithmic}
\end{algorithm}

\section{Architectural guarantees}
\label{sec:theory}

The properties below hold for the operator class defined by Equation~\eqref{eq:operator-restate} for any finite network weights, and are not consequences of training. We work under standing assumptions that meshes are oriented manifolds with shape-regular stencils, that interior edges admit a well-defined covariant frame, and that $C$ is finite. The results are intentionally simple consequences of the parameterisation. Their role is not to introduce new subdivision theory but to certify that learning cannot violate the structural constraints listed in Table~\ref{tab:structural}, which pairs each guarantee with its architectural mechanism, its theorem, and its empirical certification in Section~\ref{sec:certification}.

\begin{table}[t]
\centering
\caption{Structural guarantees and how they are realised. Each row pairs an architectural mechanism with the property it secures, the theorem that establishes the property, and the numerical check that certifies it.}
\label{tab:structural}
\begin{tabular*}{\textwidth}{@{\extracolsep{\fill}}p{0.24\textwidth} p{0.34\textwidth} c p{0.22\textwidth}@{}}
\toprule
\textbf{Property} & \textbf{Mechanism} & \textbf{Theorem} & \textbf{Empirical check}\\
\midrule
bounded displacement & $\|\Net\| \le C$, $\Gate \le 1$ & T\ref{thm:bounded} & max ratio\\
$\mathrm{SE}(3)$-equivariance & intrinsic features and covariant frame & T\ref{thm:equivariance} & machine-precision test\\
affine reproduction & $\Gate = 0$ at $\Feat^{\mathrm{curv}} = 0$ & T\ref{thm:affine} & exact-zero test\\
$O(h^2)$ proximity & $\hloc^2$ scaling & T\ref{thm:proximity} & slope test\\
regular-region $C^1$ & proximity and WD'05 & T\ref{thm:regular} & inheritance\\
$O(h)$ normal variation & cross-product sensitivity & T\ref{thm:normal-variation} & slope test\\
local Lipschitz stability & each factor Lipschitz & T\ref{thm:lipschitz} & slope test\\
planar spectral inheritance & $\Gate(0)=0$, $\nabla\Gate(0)=0$ & T\ref{thm:planar-spectral} & spectral test\\
\bottomrule
\end{tabular*}
\end{table}

\subsection{One-step properties}

The operator has three properties that hold pointwise on a single subdivision step. Bounded displacement caps the per-edge correction at the architectural budget. Equivariance means that the operator commutes with rigid motion. Affine reproduction means that the operator coincides with Loop on planar input.

\begin{theorem}[Bounded displacement]
\label{thm:bounded}
For every interior edge,
$\|q_i^\theta - q_i^0\| \le C\, \hloc^2$.
\end{theorem}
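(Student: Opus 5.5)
The plan is to read the bound straight off the operator's definition in Equation~\eqref{eq:operator-restate}. Subtracting the Loop edge vertex gives
\begin{equation*}
q_i^\theta - q_i^0 \;=\; \hloc^2\,\Gate\,\Frame\,\Net(\Feat).
\end{equation*}
Since $\hloc^2 \ge 0$ and $\Gate \ge 0$ are scalars, taking Euclidean norms yields $\|q_i^\theta - q_i^0\| = \hloc^2\,\Gate\,\|\Frame\,\Net(\Feat)\|$. The argument then needs three facts, one for each factor.

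\emph{Gate.} I will first check that $\Gate\in[0,1]$. The argument $c\|\Feat^{\mathrm{curv}}\|^2$ of $\tanh$ is nonnegative because $c>0$. On $[0,\infty)$, $\tanh$ takes values in $[0,1)$, so $\Gate \le 1$.

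\emph{Frame.} Next I will show that $\Frame$ preserves norms. By construction $\Frame=[T_i,N_i,B_i]$ has orthonormal columns. $T_i$ and $N_i$ are unit vectors by normalisation, and $N_i\perp T_i$ because the component along $T_i$ is removed before normalising. Then $B_i=T_i\times N_i$ is a unit vector orthogonal to both. So $\Frame\in\mathrm{SO}(3)$ and $\|\Frame v\|=\|v\|$ for every $v\in\mathbb{R}^3$. The main obstacle is the degenerate case, where the normalisations are undefined and the paper switches to a fallback direction from a fixed world triad. I would handle it by noting that the fallback frame is also built from unit orthonormal directions, so it is orthogonal as well, and the isometry argument still applies. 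Alternatively, I would invoke the standing assumption that interior edges admit a well-defined covariant frame.

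\emph{Network output.} Finally I will show $\|\Net(\Feat)\|\le C$. Write $z=z_\theta(\Feat)$. Then
\begin{equation*}
\|\Net(\Feat)\| = C\,\frac{\|z\|}{\max(1,\|z\|)} \le C,
\end{equation*}
because $\|z\| \le \max(1,\|z\|)$. This holds for arbitrary finite weights: a finite MLP produces a finite $z$, and the normalisation never divides by zero because the denominator is at least $1$.

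Combining the three factors gives $\|q_i^\theta-q_i^0\|\le \hloc^2\cdot 1\cdot C$, which is the claim. I expect the only real care point to be the frame's orthogonality in the fallback case; everything else follows directly from the definitions.
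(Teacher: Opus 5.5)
Your proposal is correct and matches the paper's proof: both write the displacement as $\hloc^2\Gate\Frame\Net(\Feat)$ and bound the three factors using $\Gate\le 1$ ($\tanh$ of a nonnegative argument), $\Frame\in\mathrm{SO}(3)$ being an isometry, and $\|\Net(\Feat)\|\le C$ from the normalisation. Your extra checks on the frame's orthonormality, the fallback case, and the nonvanishing denominator are careful additions that the paper leaves implicit in its standing assumptions.
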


\begin{proof}
The displacement vector is $\hloc^2 \Gate \Frame \Net(\Feat)$. By construction $\|\Net(\Feat)\| \le C$, since the network output is normalised to lie in a ball of radius $C$. The gate satisfies $\Gate \in [0,1]$, because $\tanh$ takes values in $(-1,1)$ and its argument is non-negative. The frame is an element of $\mathrm{SO}(3)$ and is therefore an isometry of $\mathbb{R}^3$. Combining these three facts yields $\|\hloc^2 \Gate \Frame \Net(\Feat)\| \le \hloc^2 \cdot 1 \cdot 1 \cdot C = C \hloc^2$.
\end{proof}

A practical corollary is that on shape-regular stencils, with $\hloc$ small relative to the local altitude, no incident triangle is inverted by the correction.

\begin{theorem}[Equivariance]
\label{thm:equivariance}
For every $g = (R, t) \in \mathrm{SE}(3)$, $\Sop(g M) = g\, \Sop(M)$.
\end{theorem}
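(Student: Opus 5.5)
The plan is to verify equivariance component by component: first for the Loop part $\Sloop$, then for each factor of the correction $\Rop$. Write $g p = Rp + t$ with $R \in \mathrm{SO}(3)$, and let $gM$ be the mesh with the same connectivity and vertex positions $g p_j$. Both $\Sop(gM)$ and $g\,\Sop(M)$ have the refined connectivity of Loop's $1$-to-$4$ split, which depends only on the combinatorics. It therefore suffices to compare vertex positions one by one.

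\textbf{Loop part.} The old-vertex rule in Equation~\eqref{eq:loop-vertex}, the interior edge rule in Equation~\eqref{eq:loop-edge}, and the boundary midpoint are all affine combinations, meaning their weights sum to one. An affine combination commutes with any affine map, so in particular with $g$. This settles the old vertices and the boundary edge vertices, and it gives $q_i^0(gM) = g\, q_i^0(M)$.

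\textbf{Correction at an interior edge.} Here the goal is to show $\hloc^2 \Gate \Frame \Net(\Feat)$ transforms to $R\,\hloc^2 \Gate \Frame \Net(\Feat)$. Translations have no linear part, so I need $\hloc^2\Gate\Frame\Net$ to be translation-invariant and $R$-covariant.

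1. Scalars. All scalars in the construction depend only on differences of positions and on Euclidean norms, so they are invariant: $\hloc$ and the four edge-length ratios in $\Feat^{\mathrm{shape}}$ are unchanged. Face normals are normalised cross products of edge differences. Since $\det R = 1$, we have $R u \times R v = R(u\times v)$, so $n_f \mapsto R n_f$ and $n_{f_1}\cdot n_{f_2}$ is unchanged. Hence $\Feat$ is invariant, and with it $\Gate$ and $\Net(\Feat)$, for any weights $\theta$.

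2. Frame. I check each column directly: $T_i \mapsto R T_i$; the projection $n - (n\cdot T)T$ maps to $R$ times itself and its norm is preserved, so $N_i \mapsto R N_i$; and $B_i \mapsto RT_i \times RN_i = R B_i$. Thus $\Frame \mapsto R\Frame$. Then $g q_i^\theta = R q_i^0 + t + R\,\hloc^2\Gate\Frame\Net$ agrees with the transformed operator output.

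\textbf{Main obstacle: the degenerate-frame fallback.} A fixed world-coordinate triad is not covariant. I would handle this in two steps. First, observe that the fallback is triggered by an invariant test (an edge length, $n_{f_1}\cdot n_{f_2}$, or the angle between $n_{f_1}+n_{f_2}$ and $T_i$), so the same edges fall back before and after $g$. Second, invoke the standing assumption that interior edges admit a well-defined covariant frame, which makes the fallback branch vacuous under the theorem's hypotheses. I would state this restriction explicitly. Orientation-reversing maps are excluded because the cross-product identity needs $\det R = +1$.
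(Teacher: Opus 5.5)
Your proposal is correct and follows essentially the same route as the paper's proof. Affine invariance of the Loop stencils gives $q_i^0(gM) = R\,q_i^0(M) + t$. Rigid-motion invariance of $\Feat$ leaves $\Gate$ and $\Net(\Feat)$ unchanged for any weights, and the covariance $\Frame(gM) = R\,\Frame(M)$ rotates the correction.

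Your additional remarks are sound refinements of points the paper leaves implicit: explicitly covering old and boundary vertices, noting that the cross-product identity needs $\det R = +1$, and excluding the non-covariant world-triad fallback via the standing assumption that every interior edge has a well-defined covariant frame.
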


\begin{proof}
Loop subdivision is affine-invariant, so $q_i^0(gM) = R q_i^0(M) + t$. The intrinsic features $\Feat$ are functions of squared edge lengths and dot products of unit face normals, which are invariant under any rigid transformation. Hence $\Feat(gM) = \Feat(M)$, so $\Net(\Feat(gM)) = \Net(\Feat(M))$ and $\Gate(gM) = \Gate(M)$. The local frame transforms covariantly, with $\Frame(gM) = R \Frame(M)$, since the tangent direction and the projected face-normal average transform by $R$ and the cross product is equivariant under $\mathrm{SO}(3)$. Combining these,
\begin{align*}
q_i^\theta(gM) &= q_i^0(gM) + \hloc^2(gM)\, \Gate(gM)\, \Frame(gM)\, \Net(\Feat(gM)) \\
&= R q_i^0(M) + t + \hloc^2(M)\, \Gate(M)\, R \Frame(M)\, \Net(\Feat(M)) \\
&= R \big(q_i^\theta(M)\big) + t.
\end{align*}
\end{proof}

\begin{theorem}[Affine reproduction]
\label{thm:affine}
If $M$ samples an affine planar surface, then $\Sop(M) = \Sloop(M)$.
\end{theorem}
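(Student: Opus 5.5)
The plan is to show that the correction term $\Rop$ vanishes identically at every vertex of $\Sop(M)$ when $M$ samples an affine plane. Then $\Sop(M)=\Sloop(M)$ follows directly from the decomposition $\Sop=\Sloop+\Rop$ in Equation~\eqref{eq:operator-restate}. The vertices of $\Sop(M)$ fall into three classes, which I will treat separately: old vertices, boundary edge vertices, and interior edge vertices. Old vertices are updated by Loop's rule~\eqref{eq:loop-vertex} in both operators, and boundary edges receive Loop's boundary midpoint in both. These two classes therefore agree by definition, and all the content lies in the interior edge vertices.

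For an interior edge $e_i$ with adjacent faces $f_1=(v_a,v_b,v_c)$ and $f_2=(v_b,v_a,v_d)$, I will first argue that the two face normals coincide. All four vertices lie in one plane $\Pi$, so both unit normals are $\pm n_\Pi$. Because $M$ is oriented and the faces are non-degenerate (shape-regular), the consistent orientation gives the same sign, so $n_{f_1}=n_{f_2}$ and hence $\Feat^{\mathrm{curv}}=1-n_{f_1}\cdot n_{f_2}=0$. The gate~\eqref{eq:gate} then gives $\Gate=\tanh(0)=0$. Since $\|\Net(\Feat)\|\le C<\infty$ and $\Frame\in\mathrm{SO}(3)$ is a bounded operator (or, in the degenerate case, the fallback triad, which is also bounded), the correction satisfies $\hloc^2\Gate\Frame\Net(\Feat)=0$ exactly. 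Therefore $q_i^\theta=q_i^0$ for every interior edge.

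The main obstacle is the sign claim $n_{f_1}=n_{f_2}$ rather than $n_{f_1}=-n_{f_2}$. The folded configuration, where the two triangles of a flat stencil lie on the same side of the edge, would give $\Feat^{\mathrm{curv}}=2$. I would rule this out by the standing assumptions: a consistently oriented manifold that samples a plane embeds as a planar triangulation, so $v_c$ and $v_d$ lie on opposite sides of the line through $e_i$, and the induced orientations of the two faces agree. The same conclusion also follows from the frame-admissibility condition $n_{f_1}\cdot n_{f_2}>-1+\epsilon$. A secondary point I will check is that the computation is exact in floating point. Since $\tanh(0)=0$ exactly and the product is zero whatever finite values $\Frame$ and $\Net$ take, no reliance on the frame's smoothness is needed.

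Finally, I will remark that the argument uses only $\Gate(0)=0$ and finiteness of the network output, not the weights. The statement therefore holds for every finite $\theta$, and the refined mesh is again planar. By induction, iterated application of $\Sop$ also agrees with iterated Loop on planar input.
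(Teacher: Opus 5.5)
Your proposal is correct and follows the paper's proof. On planar input $n_{f_1}=n_{f_2}$, so $\Feat^{\mathrm{curv}}=0$, $\Gate=\tanh(0)=0$, and the correction term vanishes at every interior edge; old vertices and boundary edges agree with Loop by definition, and your handling of the flat-fold case (using the non-folded embedding or the frame condition $n_{f_1}\cdot n_{f_2}>-1+\epsilon$) supplies justification the paper leaves implicit when it credits $n_{f_1}=n_{f_2}$ to oriented manifold connectivity alone.
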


\begin{proof}
On an affine planar mesh, oriented manifold connectivity gives $n_{f_1} = n_{f_2}$ for every interior edge, so $\Feat^{\mathrm{curv}} = 0$ and $\Gate = 0$. The correction term in Equation~\eqref{eq:operator-restate} vanishes identically.
\end{proof}

Without the gate, or with a gate that does not vanish at $\Feat^{\mathrm{curv}} = 0$, the network would in general perturb planar input.

\subsection{Proximity, first-order stability, and spectral inheritance}

The bounded-displacement bound translates directly into an asymptotic proximity bound to Loop and places $\Sop$ in the regime of the proximity-based convergence inheritance theorem of Wallner and Dyn \cite{WallnerDyn2005}. It also implies first-order stability of face normals and Lipschitz stability of the operator under vertex perturbations. Finally, the vanishing gradient of the gate at the planar reference gives an exact spectral inheritance result at planar extraordinary vertices.

\begin{theorem}[Proximity]
\label{thm:proximity}
For any mesh in the considered class, $\|\Sop(M) - \Sloop(M)\|_\infty \le C\, h(M)^2$.
\end{theorem}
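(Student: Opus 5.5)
The plan is to reduce the global sup-norm bound to the per-edge estimate of Theorem~\ref{thm:bounded}, comparing the two refined meshes vertex by vertex. Both $\Sop(M)$ and $\Sloop(M)$ are built on the same refined connectivity, Loop's $1$-to-$4$ split, so the difference $\Sop(M) - \Sloop(M)$ is a vector indexed by the vertices of that common refined mesh. The $\infty$-norm is taken as the maximum Euclidean displacement over these vertices. The vertices fall into three classes, and I will bound the displacement on each class separately.

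\textbf{Old vertices and boundary edge vertices.} By construction, and as in Algorithm~\ref{alg:pns}, every old vertex receives exactly the Loop update of Equation~\eqref{eq:loop-vertex} in both operators. Its contribution to the difference is therefore zero. The same holds for boundary edge vertices, which receive Loop's boundary midpoint in both operators.

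\textbf{Interior edge vertices.} For an interior edge, the difference is exactly $q_i^\theta - q_i^0$. Theorem~\ref{thm:bounded} bounds its norm by $C\hloc^2$. The only remaining step is to replace the local length by the global mesh size. Since $\hloc = \|p_b - p_a\|$ is the length of an edge of $M$ and $h(M) = \max_{e\in E}\|e\|$, we have $\hloc \le h(M)$, and hence $\hloc^2 \le h(M)^2$.

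Taking the maximum over all refined vertices then gives $\|\Sop(M) - \Sloop(M)\|_\infty \le \max\{0, \max_i C\hloc^2\} \le C\,h(M)^2$. The constant $C$ is the architectural budget. It is independent of the mesh and of the network weights, which is precisely what the definition of $O(h^2)$-proximity in Section~\ref{sec:loop} requires.

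The only point needing care, and so the main obstacle, is whether Theorem~\ref{thm:bounded} applies to every interior edge, including those where the fallback frame is used. I will check that the fallback still yields an element of $\mathrm{SO}(3)$, or at least an isometry, so that the estimate $\|\Frame\Net\| \le C$ survives. The fallback is a fixed world-coordinate triad, so this holds. Because the bound uses only $\Gate \le 1$, $\|\Net\| \le C$ and the orthogonality of $\Frame$, the proximity estimate does not require shape-regularity at all. Shape-regularity is needed only later, when this bound is fed into the Wallner--Dyn inheritance argument.
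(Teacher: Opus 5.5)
Your proposal is correct and matches the paper's proof: zero difference at old vertices and boundary midpoints, then Theorem~\ref{thm:bounded} together with $\hloc \le h(M)$ at interior edge vertices, followed by a supremum. Your extra check that the fallback frame remains orthonormal, so the isometry step in Theorem~\ref{thm:bounded} still applies, is a sensible point that the paper leaves implicit.
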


\begin{proof}
Old vertices are unchanged under $\Rop$, and boundary edge midpoints are unchanged. For each inserted interior edge vertex, Theorem~\ref{thm:bounded} gives $\|q_i^\theta - q_i^0\| \le C \hloc^2 \le C h(M)^2$. Taking the supremum yields the stated bound.
\end{proof}

This is the central architectural guarantee. It is the counterpart of the proximity assumption used by Wallner and Dyn, and it is the property that fails for unconstrained neural refinement, since a free vertex predictor has no a priori bound.

\begin{theorem}[Regular-region proximity inheritance]
\label{thm:regular}
Consider a sequence of shape-regular valence-$6$ refinement neighbourhoods on which Loop has uniform $C^1$ convergence and on which the \PNS{} correction map is Lipschitz in the local stencil coordinates. Under the standing assumptions, the \PNS{} sequence is $O(h^2)$-proximate to Loop and inherits Loop's regular-region $C^1$ convergence behaviour.
\end{theorem}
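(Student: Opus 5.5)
The plan is to verify the hypotheses of the proximity-based inheritance theorem of Wallner and Dyn \cite{WallnerDyn2005} and then invoke it. That theorem, in the form used for nonlinear perturbations of linear schemes, needs four ingredients. First, a linear reference scheme $S$ that converges in $C^1$ on the relevant data. Second, a nonlinear scheme $T$ with the same refinement topology. Third, a proximity inequality $\|T(M)-S(M)\|_\infty \le C h(M)^2$. Fourth, a stability or Lipschitz-type condition on $T$, or on $T-S$, in a neighbourhood of the data, together with a contractivity property of first differences. Here $S=\Sloop$ restricted to regular valence-$6$ neighbourhoods, where Loop coincides with the quartic box-spline scheme and is $C^2$, so certainly $C^1$. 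Also $T=\Sop$, which by Section~\ref{sec:components} uses the same $1$-to-$4$ topology and the same old-vertex rule.

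The first step is proximity, which I take directly from Theorem~\ref{thm:proximity}. It gives $\|\Sop(M)-\Sloop(M)\|_\infty \le C h(M)^2$ on each mesh of the sequence, with $C$ independent of the mesh. The second step is to show that the mesh sizes along the \PNS{} sequence decay geometrically. This matters because proximity is only useful if $h(M^\ell)^2$ is summable. Loop contracts first differences on regular data by a factor $\rho<1$ (the subdominant factor $1/2$ for box splines). The correction adds at most $C h_\ell^2$ per edge, so
\[
h_{\ell+1} \le \rho' h_\ell + 2C h_\ell^2 ,
\]
where $\rho'$ is a difference-contraction constant for Loop valid after a fixed number of steps. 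For $h_0$ below a threshold, we then get $h_{\ell+1}\le \tilde\rho\, h_\ell$ with $\tilde\rho<1$. Shape-regularity is preserved by the same argument, because the perturbation is second order relative to the edge length.

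The third step supplies the stability hypothesis. The theorem assumes that the correction map $\Rop$ is Lipschitz in the local stencil coordinates, and together with the standing frame-nondegeneracy assumptions this is exactly the "smoothness of the perturbation" condition in \cite{WallnerDyn2005}. To see it in the right form, I write $\Rop$ as the product $\hloc^2\,\Gate\,\Frame\,\Net$ of bounded Lipschitz factors. The features $\Feat$ are scale-invariant, and $\hloc^2$ is quadratic in the stencil, so the Lipschitz constant of $\Rop$ in differences scales like $O(h)$. This means the difference scheme of $\Sop$ is a perturbation of Loop's difference scheme of size $O(h)$, which tends to zero geometrically by the second step. The fourth step is to conclude that the \PNS{} sequence converges uniformly, since its increments are summable. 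Its limit differs from the Loop limit of the same data at each level by $O(\sum_\ell h_\ell^2)$, and the differentiated (difference) sequences converge by the perturbed-contraction argument. This yields a $C^1$ limit.

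I expect the main obstacle to be the second step, and in particular its interaction with the Lipschitz condition. The Wallner--Dyn argument needs the proximity constant and the Lipschitz constant to hold uniformly along the \emph{\PNS{}} sequence, not just along the Loop sequence. So I must show that iterates never leave the shape-regular, frame-nondegenerate class. My first attempt would be an induction on the level. Assume shape regularity with fixed constants at level $\ell$. Loop preserves it on regular data up to a controlled distortion, and the $O(h_\ell^2)$ correction perturbs angles by only $O(h_\ell)$. Summing these angle perturbations over levels gives a geometric series, so shape regularity is retained with slightly worse constants, provided $h_0$ is small enough.
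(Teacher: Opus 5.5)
Your proposal is correct and follows the same route as the paper: take the $O(h^2)$ bound from Theorem~\ref{thm:proximity}, argue that the perturbation $\Rop = \Sop - \Sloop$ is regular enough, and invoke the Wallner--Dyn inheritance theorem with $\Sloop$ as the $C^1$ (indeed $C^2$) linear reference. The paper's proof is only a few lines and does not derive the geometric decay of $h_\ell$ or the persistence of shape-regularity and frame nondegeneracy along the \PNS{} iterates; it takes these to be covered by the theorem's hypotheses and standing assumptions, so your steps two and four (and the induction you flag as the main obstacle) are extra verification rather than a different argument.
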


\begin{proof}
Loop is $C^2$ in regular regions \cite{PetersReif2008}, and Theorem~\ref{thm:proximity} provides an $O(h^2)$ proximity bound. The gate is smooth in the mesh vertex array, the frame is smooth on shape-regular stencils away from frame degeneracies, and the network is smooth as a multilayer perceptron with smooth activations. The perturbation field $\Rop = \Sop - \Sloop$ therefore has the regularity required by the proximity inheritance theorem of Wallner and Dyn \cite[Section 3]{WallnerDyn2005}. Applying that theorem with reference $\Sloop$ and perturbation $\Rop$ yields the claim.
\end{proof}

We do not claim global $C^1$ convergence on arbitrary topology. Theorem~\ref{thm:regular} is a local regular-region statement.

\begin{theorem}[Normal variation]
\label{thm:normal-variation}
On shape-regular stencils, the discrete face normals after one step of $\Sop$ satisfy $\| n_\theta - n_0 \| = O(h)$, where $n_0$ are the corresponding Loop normals.
\end{theorem}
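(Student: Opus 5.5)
The plan is to treat the face normal as a smooth function of the three vertex positions and to bound how much it can move when vertices are displaced by at most $Ch^2$. Fix a face $f$ of the refined mesh with vertices $x_1,x_2,x_3$ under \PNS{} and the corresponding Loop positions $y_1,y_2,y_3$. By Theorem~\ref{thm:proximity} and the fact that old vertices and boundary midpoints are untouched, $\|x_j - y_j\| \le C\,h(M)^2$ for each $j$. I write the unnormalised normals as $m_\theta = (x_2-x_1)\times(x_3-x_1)$ and $m_0 = (y_2-y_1)\times(y_3-y_1)$, and set $n_\theta = m_\theta/\|m_\theta\|$ and $n_0 = m_0/\|m_0\|$.

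The first step is the cross-product sensitivity estimate. Put $u = y_2-y_1$, $v = y_3-y_1$, and let $\delta u$, $\delta v$ be the perturbations, each of norm at most $2Ch^2$. Then
\[
m_\theta - m_0 = \delta u\times v + u\times\delta v + \delta u\times\delta v .
\]
The refined edges of the Loop mesh have length $O(h)$, since Loop vertices are convex combinations within a one-ring. Hence $\|m_\theta - m_0\| \le c_1 h\cdot h^2 + c_2 h^4 = O(h^3)$.

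The second step normalises the two vectors. I use the elementary inequality $\|a/\|a\| - b/\|b\|\| \le 2\|a-b\|/\|b\|$, valid for nonzero $b$. This gives $\|n_\theta - n_0\| \le 2\|m_\theta - m_0\|/\|m_0\|$.

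The main obstacle is a lower bound on $\|m_0\|$, which is twice the area of the refined Loop triangle. I would obtain it from shape regularity: the coarse stencils have angles bounded below and edge lengths comparable to $h$. The refined Loop triangles are affine images of the coarse configuration by a fixed local mask, so on shape-regular stencils their areas satisfy $\|m_0\| \ge c_3 h^2$, with $c_3$ depending only on the shape-regularity constant and the valence bound. I take this as the meaning of ``shape-regular'' carried to the refined level, as the paper's standing assumptions permit. If needed, I would verify it by a compactness argument over the normalised stencil space: nondegeneracy of the refined triangles is an open condition, and it holds on the compact set of shape-regular stars.

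Combining the three steps gives $\|n_\theta - n_0\| \le 2\,O(h^3)/(c_3h^2) = O(h)$. The constant depends only on $C$, the shape-regularity constant and the maximal valence. The bound is uniform in the network weights, because only $\|\Net\|\le C$ and $\Gate\le 1$, through Theorem~\ref{thm:bounded}, were used. Because the perturbation is $O(h^2)$ against an edge scale of $h$, the normals may tilt by a relative amount of order $h$. This explains why the bound is first order in $h$ rather than second.
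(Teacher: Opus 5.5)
Your proposal follows the paper's proof: the same bilinear expansion $\delta u\times v + u\times\delta v + \delta u\times\delta v$ gives an $O(h^3)$ change in the unnormalised normal, which is then divided by a Loop face area of order $h^2$. You are more explicit than the paper, which uses the $O(h^2)$ Loop cross product tacitly as a lower bound and skips the normalisation step; your inequality $\|a/\|a\| - b/\|b\|\| \le 2\|a-b\|/\|b\|$ and the stated assumption $\|m_0\| \ge c_3 h^2$ supply both (your compactness remark only makes that bound uniform once nondegeneracy of the refined Loop triangles is granted at every shape-regular star, so it remains an assumption, as in the paper).
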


\begin{proof}
Let $u, v$ denote two edges of a face after applying $\Sloop$, and let $\delta_u, \delta_v$ denote the perturbations of these edges introduced by $\Rop$. Theorem~\ref{thm:proximity} bounds vertex displacements by $C h(M)^2$, so $\|\delta_u\| = O(h^2)$ and $\|\delta_v\| = O(h^2)$. The bilinear identity
\begin{equation*}
(u + \delta_u) \times (v + \delta_v) - u \times v \;=\; u \times \delta_v + \delta_u \times v + \delta_u \times \delta_v
\end{equation*}
has right-hand side of norm $O(h^3)$, since $\|u\|, \|v\| = O(h)$ on shape-regular stencils. The Loop cross product has norm $O(h^2)$, equal to twice the Loop face area. Dividing gives a unit-normal perturbation of order $h^3/h^2 = h$.
\end{proof}

We make no claim of $O(h^2)$ normal variation or curvature continuity. The bound is per-level and relative to the Loop-refined mesh at the same level. It does not by itself bound accumulated normal error over many iterations, which we evaluate empirically in Section~\ref{sec:experiments}.

\begin{theorem}[Lipschitz stability]
\label{thm:lipschitz}
If the feature map $\Feat$, the frame map $\Frame$, and the network $\Net$ are Lipschitz on the shape-regular mesh class, then $\Sop$ is locally Lipschitz with respect to vertex perturbations.
\end{theorem}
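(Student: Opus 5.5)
The plan is to write $\Sop = \Sloop + \Rop$ as in Equation~\eqref{eq:operator-restate} and bound the Lipschitz constants of the two summands separately, then apply the triangle inequality. Loop is a fixed linear map: each output vertex is a convex combination of at most $k+1$ input vertices, via Equations~\eqref{eq:loop-edge} and~\eqref{eq:loop-vertex}. It is therefore globally Lipschitz in the $\ell_\infty$ vertex norm, with constant equal to the maximal row sum of absolute stencil weights, which is $1$ for $k \ge 3$. Old vertices and boundary midpoints receive no correction, so all remaining work concerns the interior edge-vertex correction
\begin{equation*}
\Rop(M)_i = \hloc^2\,\Gate\,\Frame\,\Net(\Feat),
\end{equation*}
which depends only on the four stencil vertices $p_a,p_b,p_c,p_d$.

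For the correction I will use the product rule for Lipschitz maps. If $f_1,\dots,f_m$ are bounded and Lipschitz on a set $K$, then $\|\prod f_j(x)-\prod f_j(y)\| \le \sum_j \big(\prod_{l\ne j}\sup_K\|f_l\|\big)\,\mathrm{Lip}(f_j)\,\|x-y\|$. This follows by telescoping, replacing one factor at a time.

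I then check boundedness and the Lipschitz property for each factor on a small neighbourhood $K$ of the given shape-regular stencil.
\begin{itemize}
\item The factor $\hloc^2 = \|p_b-p_a\|^2$ is smooth. On $K$ it is bounded by $h^2$ and has Lipschitz constant at most $4h$.
\item The gate $\Gate$ is bounded by $1$ by Theorem~\ref{thm:bounded}. It is Lipschitz as a composition of $\tanh$, which is $1$-Lipschitz, the smooth map $s\mapsto c s^2$ on the bounded range of $\Feat^{\mathrm{curv}}\in[0,2]$, and the feature map, which is Lipschitz by hypothesis.
\item The frame $\Frame$ has operator norm $1$ because it lies in $\mathrm{SO}(3)$, and it is Lipschitz by hypothesis.
\item The factor $\Net\circ\Feat$ is bounded by $C$ by construction. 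It is Lipschitz as a composition of Lipschitz maps. The normalisation $z\mapsto z/\max(1,\|z\|)$ is the metric projection onto the unit ball and hence $1$-Lipschitz, so the hypothesis on $\Net$ is consistent with the architecture.
\end{itemize}
Assembling these gives a local Lipschitz constant of order $4hC + h^2(\mathrm{Lip}\,\Gate\cdot C + \mathrm{Lip}\,\Frame\cdot C + \mathrm{Lip}(\Net\circ\Feat))$ for each inserted vertex. Since each output depends on finitely many inputs, taking the maximum over edges gives the bound in $\|\cdot\|_\infty$.

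The main obstacle is making the word \emph{local} precise. The frame and the features involve divisions by $\hloc$, by face areas (through the unit normals), and by the norm of the projected normal sum. These maps are Lipschitz only away from degeneracies, and the deterministic fallback frame is discontinuous. I will therefore restrict to a neighbourhood $K$ of the given mesh small enough to preserve the three frame conditions of Section~\ref{sec:components}: positive edge length, $n_{f_1}\cdot n_{f_2}>-1+\epsilon$, and a non-degenerate angle between $n_{f_1}+n_{f_2}$ and $T_i$. I will also require that $K$ preserves shape regularity, so that no fallback branch is triggered inside $K$. This is possible because each condition is a strict inequality in continuous functions of the vertices. On $K$ the hypotheses then apply, and the estimate above holds uniformly. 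That yields local, rather than global, Lipschitz continuity of $\Sop$.
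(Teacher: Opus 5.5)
Your proposal is correct and follows essentially the paper's argument. You split $\Sop=\Sloop+\Rop$, use that Loop is a fixed linear stencil, and get the Lipschitz property of the correction as a product of bounded Lipschitz factors $\hloc^2$, $\Gate$, $\Frame$, $\Net\circ\Feat$; your explicit telescoping constant and your restriction to a neighbourhood that avoids frame degeneracies and the fallback branch make precise the word ``local'' that the paper's one-paragraph proof leaves implicit.
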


\begin{proof}
Each ingredient is Lipschitz on the shape-regular class. Loop's stencil is a fixed linear map. The features are smooth functions of squared edge lengths and unit normals. The frame is a smooth function of the same quantities. The gate is the composition of a quadratic polynomial with $\tanh$, both Lipschitz on bounded sets. The network is Lipschitz by hypothesis. Multiplying bounded Lipschitz factors and adding the affine map $\Sloop$ yields a locally Lipschitz operator.
\end{proof}

\begin{theorem}[Planar spectral inheritance]
\label{thm:planar-spectral}
Let $M_{\mathrm{flat},k}$ denote a planar valence-$k$ star. Then $D\Sop(M_{\mathrm{flat},k}) = D\Sloop(M_{\mathrm{flat},k})$, and the tangent eigenvalues $\lambda_t = |\lambda_t|\,e^{\pm i\,2\pi/k}$ with $|\lambda_t| = \tfrac{3}{8} + \tfrac{1}{4}\cos(2\pi/k)$ and the Reif spectral gap of $\Sop$ at $M_{\mathrm{flat},k}$ coincide with Loop's.
\end{theorem}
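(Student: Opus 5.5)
The argument is a product-rule computation for the derivative of the correction $\Rop = \Sop - \Sloop$ at the planar star. Since $\Sloop$ is linear, $D\Sloop$ is just the Loop subdivision matrix. So it suffices to show that $D\Rop(M_{\mathrm{flat},k}) = 0$. We then read off the eigenstructure from the classical Loop analysis quoted in Section~\ref{sec:loop}.

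Old vertices and boundary midpoints are untouched by $\Rop$, so only the inserted interior edge vertices contribute. For each such edge $e_i$ the correction is $r_i(P) = \hloc(P)^2\, \Gate(P)\, \Frame(P)\, \Net(\Feat(P))$, a product of four factors depending on the star vertex array $P$. Differentiating at $P_0 = M_{\mathrm{flat},k}$ in a direction $\dot P$ gives
\begin{equation*}
Dr_i(P_0)[\dot P] = D(\hloc^2)[\dot P]\,\Gate\,\Frame\,\Net + \hloc^2\, D\Gate[\dot P]\,\Frame\,\Net + \hloc^2\,\Gate\, D(\Frame\,\Net(\Feat))[\dot P].
\end{equation*}
At $P_0$ every pair of adjacent faces is coplanar with consistent orientation, so $\Feat^{\mathrm{curv}}(P_0) = 0$. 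Hence $\Gate(P_0) = 0$, which kills the first and third terms. For the middle term I will use the chain rule $D\Gate = \tanh'(0)\cdot 2c\,\Feat^{\mathrm{curv}}\, D\Feat^{\mathrm{curv}}$. This vanishes at $P_0$ because it carries the factor $\Feat^{\mathrm{curv}}(P_0)=0$. This is exactly the order-two zero of the gate's argument noted after Equation~\eqref{eq:gate}.

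The main obstacle is justifying differentiability of each factor at $P_0$, so that the product rule applies and no factor blows up. The factor $\hloc^2$ is polynomial. The normals $n_{f}$ are smooth whenever faces are nondegenerate, so $\Feat^{\mathrm{curv}}$ is $C^1$ near $P_0$ and $D\Feat^{\mathrm{curv}}$ is finite. The frame $\Frame$ is smooth at a planar star: there $n_{f_1}=n_{f_2}$ is orthogonal to $T_i$, so the projected normal average has norm $2$ and is far from the degenerate regime, and no fallback is triggered. The factor $\Net$ requires the most care. The normalisation $z/\max(1,\|z\|)$ is only Lipschitz, not differentiable, on $\|z\|=1$. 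I would handle this by noting that the third term involves $D(\Frame\Net)$ multiplied by $\Gate(P_0)=0$. A Lipschitz factor times a $C^1$ factor vanishing at $P_0$ is still differentiable there with zero derivative, since $\|\Gate(P)\,(\Frame\Net)(P) - 0\| \le \|\Gate(P)\|\,C = o(\|P-P_0\|)$ because $\Gate = O(\|P-P_0\|^2)$. In fact this bound handles the whole correction directly: $\|r_i(P)\| \le C\hloc^2\Gate(P) = O(\|P-P_0\|^2)$, using Theorem~\ref{thm:bounded}-type boundedness. This gives $D r_i(P_0)=0$ without differentiating $\Net$ at all, and I would present that as the main argument.

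With $D\Sop(P_0) = D\Sloop(P_0)$, the two linearised operators are the same $(k{+}1)$-dimensional matrix on star configurations. They therefore share the same eigenvalues and eigenvectors. The Loop values $|\lambda_t| = \tfrac38 + \tfrac14\cos(2\pi/k)$ with arguments $\pm 2\pi/k$ come from the standard discrete Fourier block-diagonalisation of the Loop star matrix under the cyclic symmetry \cite{Loop1987,PetersReif2008}, and the Reif gap $|\lambda_t|-|\lambda_3|$ transfers verbatim. I would remark that this is a statement about the linearisation at the single reference configuration only, not about nearby curved stars.
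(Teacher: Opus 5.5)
Your proof is correct and is essentially the paper's argument. Both rest on the gate being a multiplicative factor with an order-two zero at the planar star. Your direct estimate $\|r_i(P)\| \le C\hloc^2\Gate(P) = O(\|P-P_0\|^2)$ is a more careful version of the paper's chain-rule computation on $\tanh(cx^2)$: it never differentiates $\Frame$ or the normalisation $z/\max(1,\|z\|)$, which fails to be differentiable where $\|z\|=1$ and which the paper's product-rule phrasing tacitly assumes smooth.

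One caveat, inherited from the statement. The Fourier block-diagonalisation of the Loop star matrix actually yields the real double eigenvalue $\tfrac38+\tfrac14\cos(2\pi/k)$; the phases $e^{\pm 2\pi i j/k}$ belong to the eigenvectors, not the eigenvalues. So you should not attribute arguments $\pm 2\pi/k$ to it, although this does not affect the conclusion that $D\Sop$ and $D\Sloop$ have identical spectra.
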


\begin{proof}
At $M_{\mathrm{flat},k}$ the curvature features vanish, so $\Gate = 0$, and by Theorem~\ref{thm:affine} the correction vanishes pointwise. The correction has the form $\Rop = \hloc^2 \cdot \tanh(c x^2) \cdot \Frame \cdot \Net$, where $x = \|\Feat^{\mathrm{curv}}\|$ depends smoothly on the mesh vertex array. The Jacobian of $\tanh(c x^2)$ factors as $\mathrm{sech}^2(c x^2) \cdot 2 c x \cdot \nabla x$. At $x = 0$ the factor $2 c x$ vanishes, so the entire Jacobian vanishes. Since the gate is a multiplicative factor and has a zero of order two at the planar reference, both $\Rop$ and $D\Rop$ vanish at $M_{\mathrm{flat},k}$. The linearisations therefore agree, and the eigenvalue and Reif-gap claims follow immediately.
\end{proof}

Theorem~\ref{thm:planar-spectral} is a planar spectral inheritance result, not a global proof of $C^1$ continuity at extraordinary vertices. Off the planar reference, the spectrum of $\Sop$ depends on $C$ and on the network weights, and we provide numerical certificates in Section~\ref{sec:certification}.

\section{Numerical certification}
\label{sec:certification}

We treat the architectural guarantees of Section~\ref{sec:theory} as testable predictions and verify them numerically. Because the properties hold for any finite weights, the strongest form of certification is to seek violations adversarially, and we use both untrained networks and adversarially trained ones throughout.

\subsection{Structural certification and ablation}

For each property we quantify the empirical error on a battery of shape-regular paraboloid and bumpy-sphere meshes. Table~\ref{tab:structural-numerical} summarises the measurements. The equivariance error is at machine precision. The affine reproduction error is exactly zero, because the gate evaluates to zero at planar configurations. Refining the mesh and measuring the proximity-bound exponent recovers slope $2.000$ for proximity and slope $1.000$ for normal variation. The empirical Lipschitz slope is approximately $1.02$. Adversarial searches then attempt to violate the architectural bounds directly. Maximising the proximity ratio $\|q_i^\theta - q_i^0\| / (C \hloc^2)$ by gradient ascent on both the network weights and the mesh configuration, with $C$ held fixed, leaves the ratio at or below unity within numerical tolerance throughout the search. Maximising the equivariance error similarly leaves it at machine precision. The certification is by construction, and no choice of weights produces a mesh that violates Theorems~\ref{thm:bounded} or~\ref{thm:equivariance}.

\begin{table}[t]
\centering
\caption{Structural tests on shape-regular paraboloid and bumpy-sphere meshes. Each row reports the expected value and the empirical measurement.}
\label{tab:structural-numerical}
\begin{tabular*}{0.7\textwidth}{@{\extracolsep{\fill}}l c c@{}}
\toprule
\textbf{Test} & \textbf{Expected} & \textbf{Measured}\\
\midrule
Equivariance error & 0 & $3.5 \times 10^{-15}$\\
Affine reproduction error & 0 & $0$\\
Proximity scaling slope & 2 & $2.000$\\
Normal-variation slope & 1 & $1.000$\\
Stability slope & 1 & $\approx 1.02$\\
\bottomrule
\end{tabular*}
\end{table}

We complement this certification with an architectural ablation that removes each element in turn and measures which structural property breaks. Table~\ref{tab:ablations} summarises the pattern. Removing the $h^2$ scaling breaks asymptotic proximity. Removing the gate breaks affine reproduction. Removing boundedness breaks the proximity envelope. Replacing the entire structured operator with an unconstrained multilayer perceptron eliminates all architectural guarantees. Each architectural element protects a specific property. The unconstrained variant fits better in one step on every metric tested, consistent with the framing that one-step fitting and repeated-application stability are different problems. The table also includes a \PNS{}-with-random-$\theta$ row, in which only the network weights are replaced by random initialisation. Since the untrained network provides no informative direction inside the $Ch^2$ envelope, the one-step fit degrades to the level of the Loop reference, while the architectural columns remain unaffected because they are consequences of the parameterisation.

\begin{table}[t]
\centering
\caption{Architectural ablation, showing which structural property breaks when each constraint is removed. The random-$\theta$ row keeps the full architecture and replaces trained weights by random initialisation.}
\label{tab:ablations}
\begin{tabular*}{\textwidth}{@{\extracolsep{\fill}}l c c c c@{}}
\toprule
\textbf{Variant} & \textbf{Proximity} & \textbf{Affine repro.} & \textbf{Repeated stable} & \textbf{One-step fit}\\
\midrule
\PNS{} (full, trained) & yes & yes & yes & moderate\\
\PNS{} with random $\theta$ & yes & yes & yes & near Loop\\
no $h^2$ scaling & no asymptotic & yes & weak & better\\
no gate & yes (one step) & no & weak & better\\
unbounded output & no bound & maybe & weak & better\\
unconstrained MLP & no & no & fails under iter.\ & best\\
\bottomrule
\end{tabular*}
\end{table}

\subsection{Spectral certification}

We verified Loop's analytic tangent eigenvalues numerically for $k = 3, \ldots, 12$ by extracting the local subdivision matrix from a planar valence-$k$ star and diagonalising it. We also verified Theorem~\ref{thm:planar-spectral} by computing the local linearisation $D\Sop$ at $M_{\mathrm{flat},k}$ and confirming that it equals $D\Sloop$ to machine precision. Off the planar reference, we measured the Reif spectral gap $|\lambda_t| - |\lambda_3|$ as a function of perturbation magnitude $\delta$ and budget $C$. The gap is preserved at $\delta = 0$ for any $C$, and it degrades smoothly with $\delta$ at a rate proportional to $C$. Adversarial gradient ascent on a gap-collapse objective confirms that $C$ acts as a spectral safety parameter, in the sense that smaller $C$ widens the safety margin against spectral collapse.

\section{Experiments}
\label{sec:experiments}

\subsection{Setup, baselines, and metrics}
\label{sec:setup}

We focus the trained-operator evaluation on a synthetic family of Gaussian ridges $z(x, y) = a \exp(-b (y')^2)$, where $y'$ is the across-ridge coordinate after a random rotation. This family is well suited to isolating the regime in which bounded $O(h^2)$ corrections should help. Fixed Loop stencils smooth the ridge cap because the inserted edge vertex receives only one-eighth weight from the diagonal cross of opposite vertices. Truncation error concentrates near the ridge, where a finite correction budget is most useful. The analytic ground-truth surface gives exact signed-distance and analytic-normal references, so paired evaluation is unambiguous. The benchmark is not intended to demonstrate universal superiority over Loop. Section~\ref{sec:discussion} discusses a saddle null result on smooth surfaces where this regime does not apply. The dataset comprises $30$ random ridge instances with $a \in [0.3, 0.8]$ and $b \in [3, 10]$, together with random rotation and a small centre offset. It is split $20/10$ into training and held-out test sets.

The training loss combines geometric and structural terms,
\begin{equation*}
\mathcal{L} \;=\; \lambda_{\mathrm{sdf}}\mathcal{L}_{\mathrm{sdf}} + \lambda_{\mathrm{n}}\mathcal{L}_{\mathrm{n}} + \lambda_{\mathrm{f}}\mathcal{L}_{\mathrm{fair}} + \lambda_{\mathrm{p}}\mathcal{L}_{\mathrm{prox}}.
\end{equation*}
The signed-distance term $\mathcal{L}_{\mathrm{sdf}}$ penalises the squared analytic signed distance at inserted vertices. The normal term $\mathcal{L}_{\mathrm{n}}$ aligns the discrete inserted-vertex normal with the analytic surface normal. The fairness term $\mathcal{L}_{\mathrm{fair}}$ penalises high-frequency normal variation along the ridge. The proximity term $\mathcal{L}_{\mathrm{prox}}$ is a soft regulariser on the proximity ratio. This proximity penalty is not the source of the proximity bound, which is architectural. It merely discourages the network from spending the full envelope when a smaller correction suffices. In our experiments the loss weights are $\lambda_{\mathrm{sdf}} = 1$, $\lambda_{\mathrm{n}} = 0.1$, $\lambda_{\mathrm{f}} = 0.01$, and $\lambda_{\mathrm{p}} = 0.01$. The network $\Net$ is a three-layer multilayer perceptron with sixty-four hidden units per layer and $\tanh$ activations. Optimisation uses Adam \cite{KingmaBa2015} with a fixed learning rate of $10^{-3}$, batch size sixteen, and one hundred and eighty training steps.

We compare against seven other methods that share the same evaluation harness. Loop subdivision \cite{Loop1987} serves as the reference scheme. Modified Butterfly \cite{ZorinSchroederSweldens1996} and PN-triangles \cite{VlachosPetersBoydMitchell2001} represent classical adaptive stencils. A $1$-ring quadric fit approximates the local $1$-ring by a paraboloid computed by least squares, in the spirit of the quadric-error reasoning of Garland and Heckbert \cite{GarlandHeckbert1997}. A normal-displacement baseline pushes the Loop midpoint along the average normal by a single scalar fit on training data, in the spirit of Phong tessellation \cite{BoubekeurAlexa2008}. A curvature-adaptive scheme uses the same gate as \PNS{} to form a $\tanh$-gated convex combination of Loop and Butterfly, but with no learned correction. An unconstrained neural baseline is a three-layer multilayer perceptron that predicts an unbounded vertex offset directly from raw $1$-ring coordinates. Neural Subdivision \cite{LiuKimChaudhuriAigermanJacobson2020}, whose relationship to \PNS{} we discussed in Section~\ref{sec:related}, occupies a different point in the design space and is treated there as related work rather than as a repeated-subdivision baseline in this paper.

We report signed-distance root-mean-square error, near-feature signed-distance root-mean-square error, Hausdorff distance, mean and maximum face-normal angular error, maximum face-normal jump after $L$ iterations, and the maximum proximity ratio $\max_i \|q_i^\theta - q_i^0\| / (C \hloc^2)$. The near-feature region is defined by $|y'| < 0.3$. All paired comparisons report bootstrap $95\%$ confidence intervals computed from $2000$ resamples, and the relative improvement of a method with mean error $E_M$ over Loop with mean error $E_{\mathrm{ref}}$ is defined as $(E_{\mathrm{ref}} - E_M)/E_{\mathrm{ref}} \times 100\%$.

\subsection{Headline metrics and repeated subdivision}
\label{sec:headline}

On the held-out ridge test set, \PNS{} improves on Loop in the metrics where $O(h^2)$ corrections are the right tool. Figure~\ref{fig:headline} shows the paired improvement of both \PNS{} and the unconstrained neural baseline. Table~\ref{tab:ridge} gives the numerical values with bootstrap confidence intervals. Three observations follow. The unconstrained baseline produces larger one-step improvements on the geometric error metrics, namely signed-distance and Hausdorff error, because free vertex prediction has more capacity to fit a target than gated-bounded correction. On normal error, however, the unconstrained baseline sharply improves near-feature accuracy but degrades the global normal error, with mean value $-4.0\%$ and lower bootstrap bound at $-16.5\%$. This pattern is a signature of high-frequency artefacts introduced by aggressive feature fitting. Finally, \PNS{} leaves the normal error essentially unchanged across the surface, with a near-feature improvement of $+0.4\%$ and a global improvement of $+0.1\%$, both statistically insignificant. Approximation gains in signed-distance and Hausdorff error are realised without compromising normal quality.

\begin{figure}[t]
\centering
\includegraphics[width=0.94\textwidth]{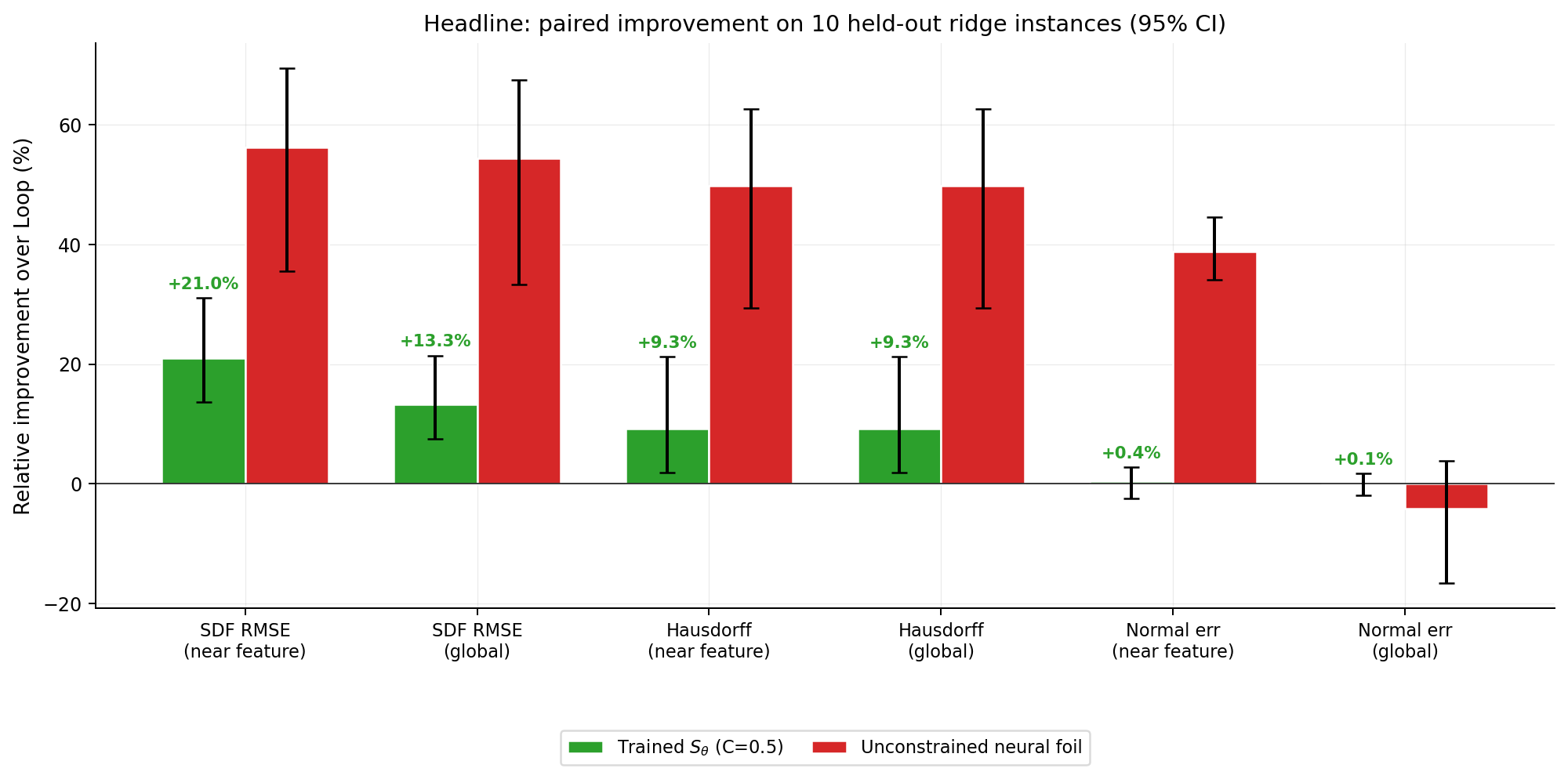}
\caption{Paired improvement over Loop on the held-out ridge test set with bootstrap $95\%$ confidence intervals at $n = 10$. The green bars correspond to trained \PNS{} with $C = 0.5$, and the red bars correspond to the unconstrained neural baseline. The unconstrained baseline achieves larger one-step improvements on geometric error metrics but degrades the global normal error. \PNS{} improves signed-distance and Hausdorff error while leaving normal error essentially unchanged.}
\label{fig:headline}
\end{figure}

\begin{table}[t]
\centering
\caption{Held-out ridge comparison with paired bootstrap $95\%$ confidence intervals at $n = 10$. Improvements are relative to Loop, and the abbreviation n.s.\ denotes a difference that is not statistically significant.}
\label{tab:ridge}
\begin{tabular*}{\textwidth}{@{\extracolsep{\fill}}l c c c c@{}}
\toprule
\textbf{Metric} & \textbf{\PNS{}} & \textbf{$95\%$ CI} & \textbf{Unconstrained} & \textbf{$95\%$ CI}\\
\midrule
SDF RMSE near feature & $+21.0\%$ & $[13.7, 31.3]$ & $+56.0\%$ & $[35.5, 69.4]$\\
SDF RMSE global & $+13.3\%$ & $[7.4, 21.4]$ & $+54.0\%$ & $[33.6, 67.5]$\\
Hausdorff near feature & $+9.3\%$ & $[1.7, 21.0]$ & $+49.6\%$ & $[29.4, 62.4]$\\
Hausdorff global & $+9.3\%$ & $[1.7, 21.0]$ & $+49.6\%$ & $[29.4, 62.4]$\\
Normal error near & $+0.4\%$ & $[-2.5, 2.5]$ (n.s.) & $+38.7\%$ & $[34.0, 44.5]$\\
Normal error global & $+0.1\%$ & $[-2.0, 3.6]$ (n.s.) & $-4.0\%$ & $[-16.5, 4.0]$\\
\bottomrule
\end{tabular*}
\end{table}

Figure~\ref{fig:money} makes the same tradeoff visible on one held-out ridge instance. The left and middle panels show the ridge cross-section. Loop systematically underfits the ridge cap because its fixed stencil averages the crest with its lower neighbours. Trained \PNS{} lifts the cap toward the analytic ridge without overshooting. The unconstrained baseline pushes the cap up more aggressively and tracks the analytic curve most closely at the peak. The right panel shows the per-vertex signed-distance error along the cross-section. In the shaded near-feature band both learned methods improve on Loop. Outside that band, the unconstrained baseline accumulates residual error at the ridge shoulders, which is the same signature Figure~\ref{fig:headline} reports as a $-4.0\%$ global normal error.

\begin{figure}[t]
\centering
\includegraphics[width=0.94\textwidth]{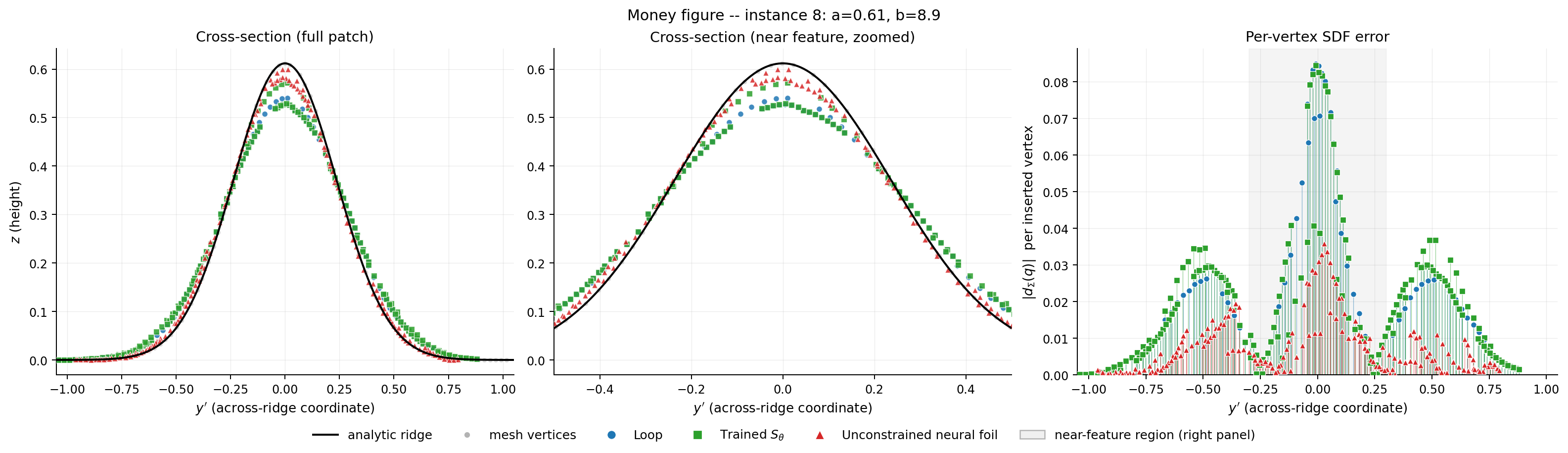}
\caption{Cross-sectional behaviour on one held-out ridge instance with parameters $a = 0.61$ and $b = 8.9$. The left panel shows the cross-section along the full patch, comparing the analytic ridge (black), Loop (blue), trained \PNS{} (green), and the unconstrained neural baseline (red). The middle panel is a zoomed view of the near-feature region. The right panel shows the per-vertex signed-distance error. In the shaded near-feature band the two learned methods improve on Loop. Outside the band, the unconstrained baseline retains visible residual error at the ridge shoulders.}
\label{fig:money}
\end{figure}

Figure~\ref{fig:repeated} reports the central empirical result of the paper. Through four levels of subdivision applied to a held-out ridge instance, three panels track approximation error, proximity ratio, and mesh regularity across methods. \PNS{} is the only method in our comparison that combines learned feature-sensitive improvement with a prescribed Loop-proximity cap under repeated subdivision. Modified Butterfly and PN-triangles are competitive at one step but exit the Loop-proximity envelope as iterations accumulate. The $1$-ring quadric and the unconstrained neural baseline leave the envelope dramatically, and the unconstrained baseline's maximum normal jump approaches $\pi$, indicating extensive face flips. The unconstrained baseline is not intended as a production method. It is a control that isolates the effect of removing the proximity envelope, and its strong one-step fit together with poor repeated behaviour illustrates the central tradeoff between expressivity and operator stability. Terminology here is deliberate. The finding does not say that Butterfly or PN-triangles are not subdivision operators. It says that under our repeated-subdivision criterion they do not remain Loop-proximate. Table~\ref{tab:repeated} gives the numerical summary on a hard ridge instance.

\begin{figure}[t]
\centering
\includegraphics[width=0.94\textwidth]{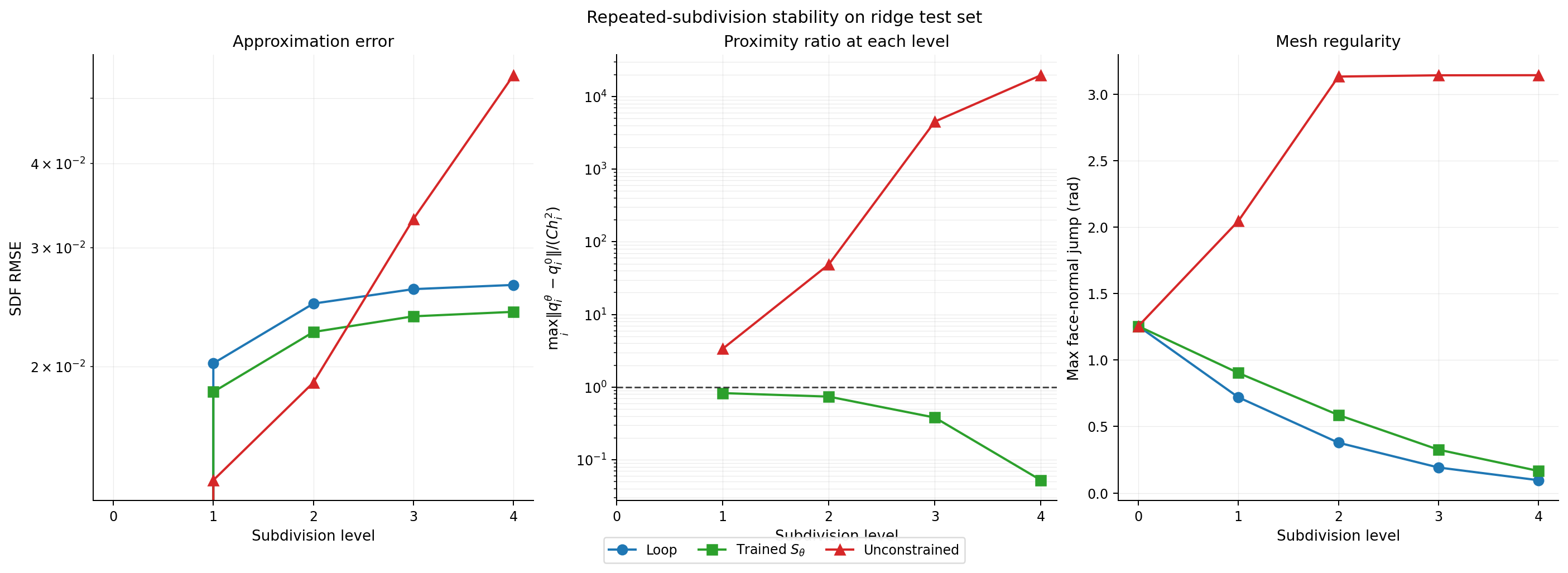}
\caption{Repeated-subdivision behaviour on the ridge test set across four refinement levels. The left panel shows signed-distance root-mean-square error against subdivision level. \PNS{} tracks Loop closely and remains bounded across all four levels, while the unconstrained baseline starts with a much lower one-step error and then diverges. The middle panel shows the maximum proximity ratio $\max_i \|q_i^\theta - q_i^0\| / (C h_i^2)$ at each level, on a logarithmic axis. \PNS{} stays at or below the architectural cap of unity throughout, while the unconstrained baseline rises past $10^4$ by level four. The right panel shows the maximum face-normal jump.}
\label{fig:repeated}
\end{figure}

\begin{table}[t]
\centering
\caption{Repeated-subdivision summary on a hard ridge instance. Loop is the reference. The last column is the maximum proximity ratio observed across four subdivision levels.}
\label{tab:repeated}
\begin{tabular*}{\textwidth}{@{\extracolsep{\fill}}l c c c c@{}}
\toprule
\textbf{Method} & \textbf{L1 max err.} & \textbf{L4 max err.} & \textbf{L4 normal jump} & \textbf{max prox.\ ratio L1--L4}\\
\midrule
Loop & 0.098 & 0.103 & 0.139 & 0.000\,(reference)\\
Butterfly & 0.092 & 0.094 & 0.862 & 61.6\\
PN-triangle & 0.076 & 0.086 & 0.858 & 8.0\\
Quadric & 0.138 & 0.210 & 3.122 & $458.7$\\
Normal-disp.\ & 0.098 & 0.103 & 0.139 & $\approx 0$\\
Curv.-adaptive & 0.097 & 0.100 & 0.390 & 5.2\\
\PNS{} & 0.098 & 0.104 & 0.130 & $\le 1$\\
Unconstrained baseline & 0.090 & 0.088 & $\approx \pi$ & $192.7$\\
\bottomrule
\end{tabular*}
\end{table}

\subsection{Capacity sweep and budget localisation}
\label{sec:budget}

Figure~\ref{fig:Csweep} shows how the approximation gain and the proximity-budget usage depend on the architectural budget $C$, swept over $0.1$, $0.25$, $0.5$, and $1.0$. Approximation gains rise from roughly $5\%$ near-feature signed-distance improvement at $C = 0.1$ to roughly $22\%$ at $C = 1.0$, with diminishing returns beyond $C = 0.5$. At every tested $C$, the maximum proximity ratio remains at or below the architectural cap to within numerical tolerance. The hyperparameter $C$ therefore behaves as a meaningful user-facing knob. Larger $C$ gives more aggressive feature fitting, while smaller $C$ tightens proximity to Loop and widens the spectral safety margin.

\begin{figure}[t]
\centering
\includegraphics[width=0.94\textwidth]{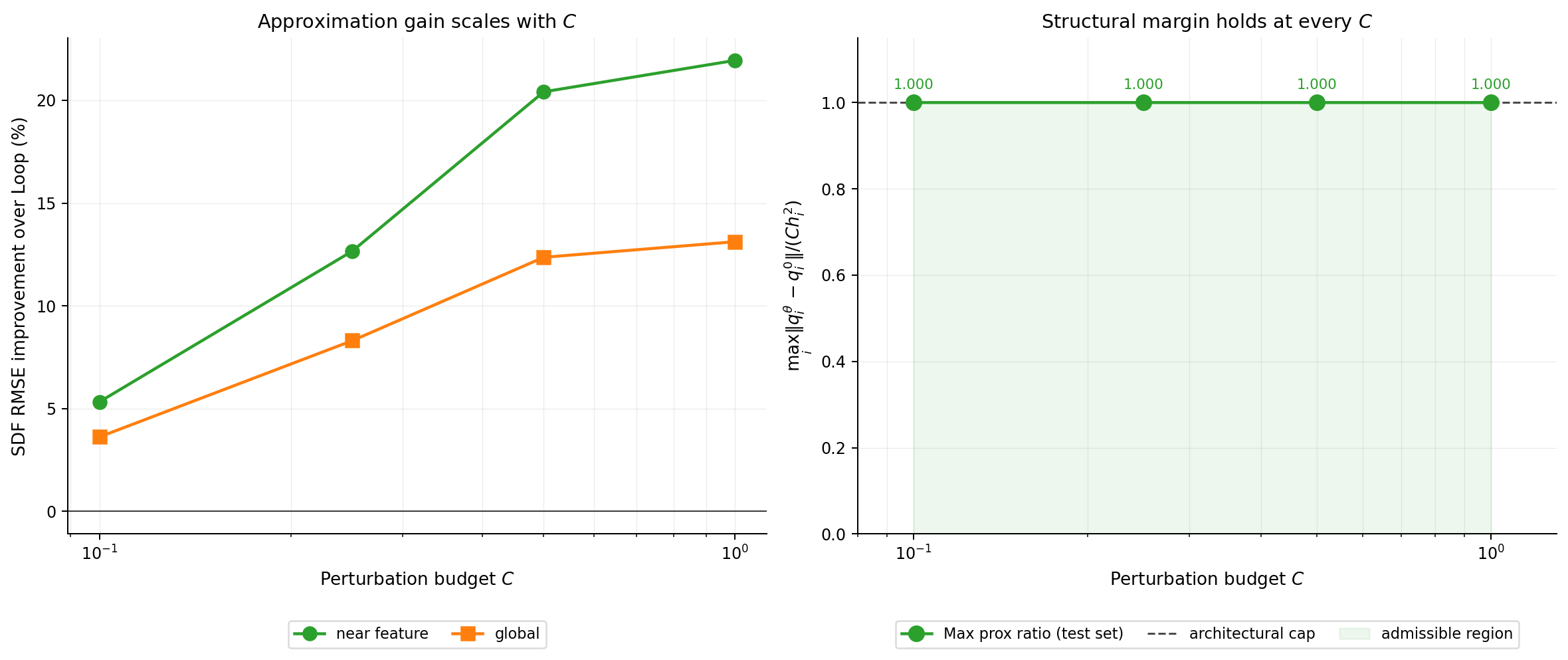}
\caption{Capacity sweep on the ridge task. The left panel shows near-feature and global signed-distance root-mean-square error improvement over Loop as a function of the perturbation budget $C$. The right panel shows that the maximum measured proximity ratio remains at the architectural cap of unity at every tested $C$, so the admissible region below the cap is fully available.}
\label{fig:Csweep}
\end{figure}

The trained operator does not spend its budget uniformly. Figure~\ref{fig:budget} shows the per-edge proximity ratio rendered on the test mesh, together with a scatter plot against the across-ridge coordinate $|y'|$. The proximity ratio tracks the gate closely and concentrates on ridge-parallel edges near the feature. Away from the ridge, the gate is close to zero and the operator reduces to Loop. Figure~\ref{fig:budget-verification} makes the mechanism explicit. The left panel plots the per-edge proximity ratio directly against the gate value $\Gate$. The points lie on the identity line, so the trained network saturates its architectural budget exactly where the gate permits it. The right panel repeats the plot from Figure~\ref{fig:budget}, now coloured by the alignment $|T_i \cdot t_{\mathrm{ridge}}|$ between the edge tangent and the ridge tangent. Both saturating clusters, at the ridge crest and at the shoulder, consist of edges parallel to the ridge, while perpendicular edges receive essentially no correction. The budget therefore concentrates on the geometrically privileged direction, which is where a fixed Loop stencil systematically underfits.

\begin{figure}[t]
\centering
\includegraphics[width=0.94\textwidth]{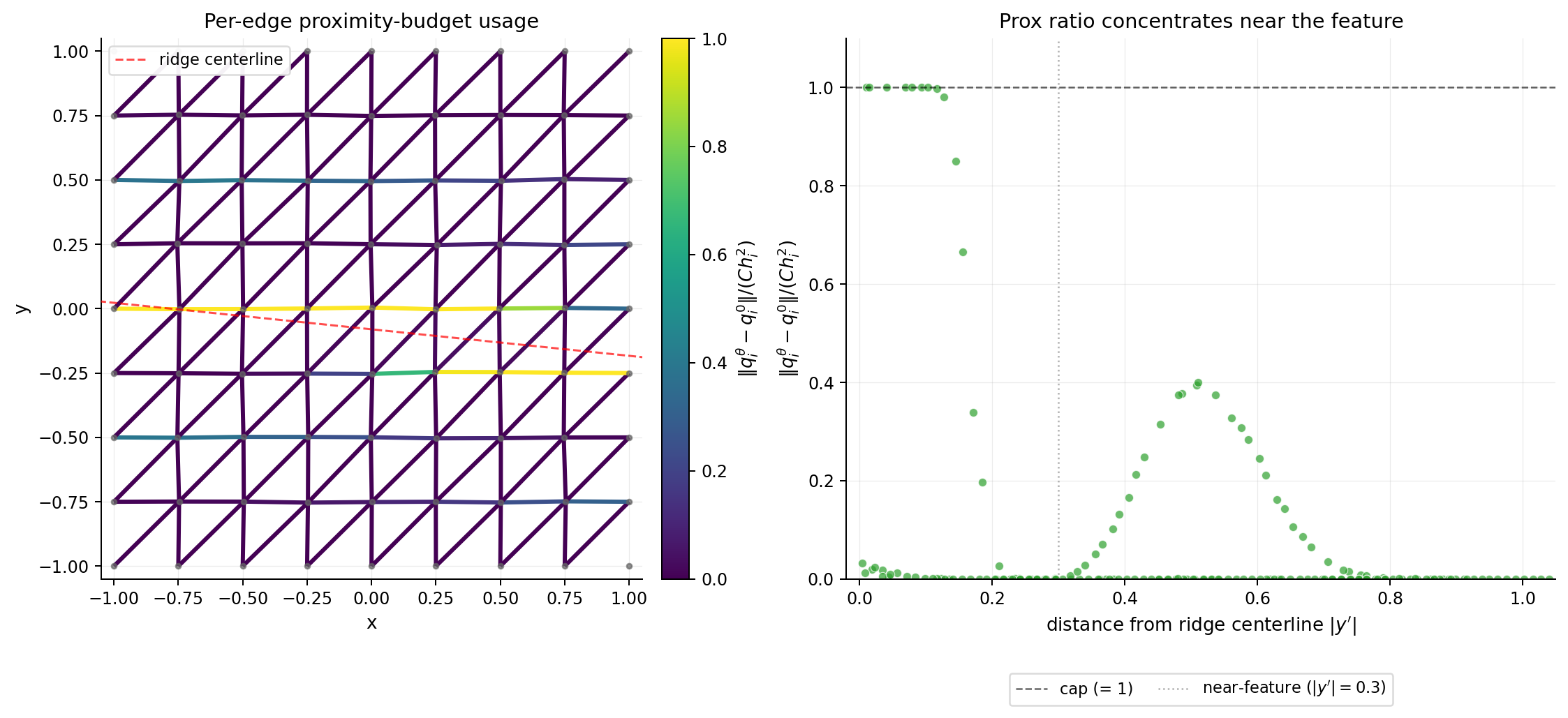}
\caption{Budget localisation. The left panel shows the per-edge proximity ratio rendered on the test mesh in a top-down view, with the ridge centreline drawn dashed. Edges parallel to the ridge crest saturate the cap, while interior-region edges use at most a few percent of the budget. The right panel shows the same per-edge proximity ratio plotted against distance from the ridge centreline. The secondary lobe near $|y'| \approx 0.5$ corresponds to the ridge shoulder, where the second derivative of the surface changes sign.}
\label{fig:budget}
\end{figure}

\begin{figure}[t]
\centering
\includegraphics[width=0.94\textwidth]{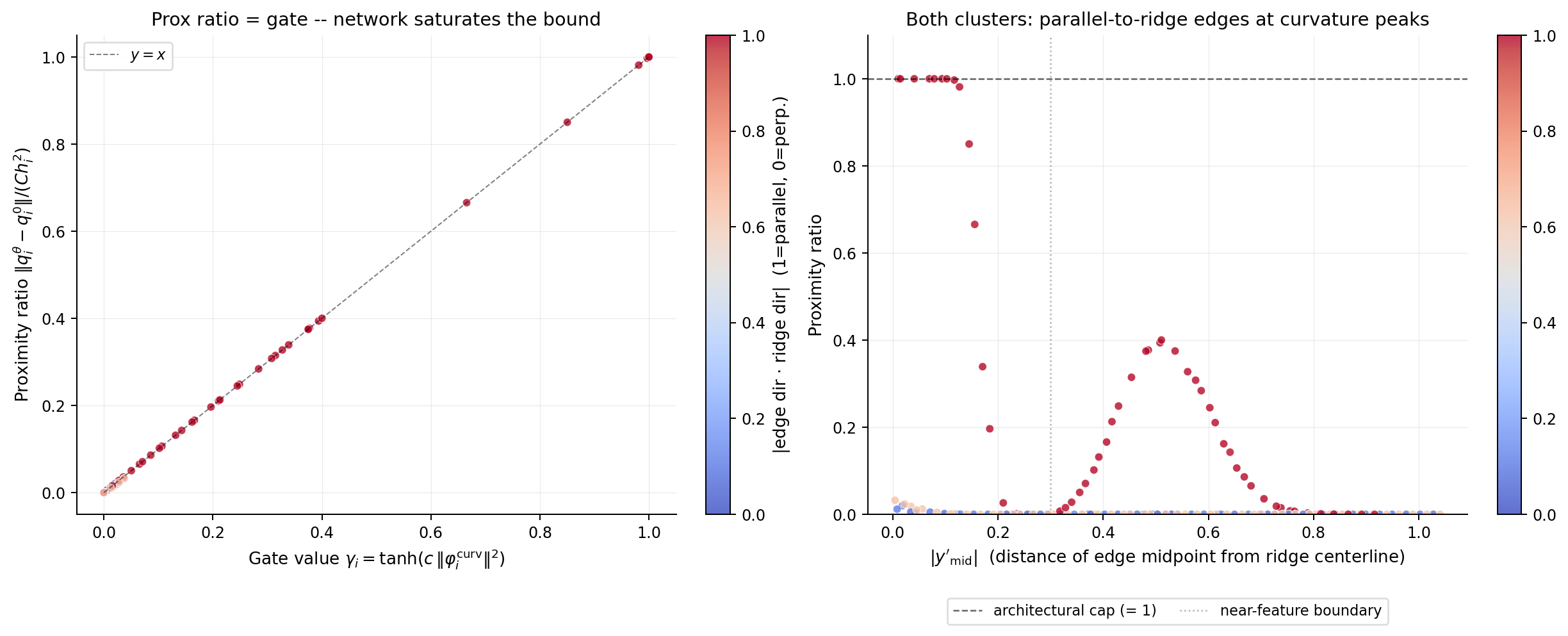}
\caption{Budget verification. The left panel shows the per-edge proximity ratio plotted against the gate value $\Gate$. The points cluster on the identity line, which demonstrates that the trained network saturates the architectural bound at the gate. The right panel shows the proximity ratio plotted against distance from the ridge centreline, coloured by $|T_i \cdot t_{\mathrm{ridge}}|$. Red corresponds to edges parallel to the ridge crest and blue to edges perpendicular to it. The two saturating clusters consist entirely of ridge-parallel edges.}
\label{fig:budget-verification}
\end{figure}

Finally, we tested a wider, deeper network with three times the parameter count and twice the training steps. The bootstrap confidence interval on the near-feature signed-distance improvement overlaps the smaller-network result, which suggests that on this task the operator class is approximation-limited by the $Ch^2$ envelope rather than by network capacity. On a smooth saddle dataset, \PNS{} provides essentially no improvement over Loop, which is the expected behaviour. Saddle surfaces are already well approximated by Loop, with truncation error that is small and not localised, so there is little for a curvature-gated correction to do and the gate stays close to zero.

\section{Discussion}
\label{sec:discussion}

\subsection{Architecture and training contributions}

The architecture provides the guarantee, while training selects a useful member of the guaranteed class. This separation is central to the method. The bound on the network output is enforced by a single normalisation. The frame is constructed analytically from the $1$-ring. The gate is a single scalar nonlinearity. None of these elements depends on a training loss term, and none can be unlearned by an adversarial weight choice. The unconstrained neural baseline routinely beats \PNS{} at one step, and this is consistent with the framing of the paper. Free vertex prediction has more capacity to fit a target than gated-bounded correction, and the same flexibility makes it not a subdivision operator. Under repeated application its iterates leave the proximity envelope, develop high-frequency artefacts, and produce flipped faces. The pathology is structural rather than incidental. A one-step regressor is not designed to be applied to its own output.

\PNS{} is best understood as finite-level feature-sensitive refinement with asymptotic structural safety. Because corrections are scaled by $\hloc^2$, their magnitude vanishes under repeated refinement on shape-regular sequences. This limits asymptotic freedom but preserves subdivision behaviour. The recent Heisenberg example of Ugail and Carriazo \cite{UgailCarriazo2026} makes clear why the architectural approach is worth the design effort. In that construction, a natural coordinate correction turns the classical four-point subdivision rule into a nonlinear scheme that is only linearly close to its Euclidean reference, and the central-channel limit is Zygmund but not $C^1$. The correction is harmless at any one refinement step, and yet its repeated injection at every scale is precisely what destroys smoothness. \PNS{} is designed to avoid the analogous failure mode by construction, keeping the correction $O(h^2)$ small at every step and letting its amplitude vanish under repeated refinement.

The hyperparameter $C$ plays three roles simultaneously, all of which point in the same direction. Larger $C$ allows more correction and serves as an approximation-capacity parameter. Smaller $C$ tightens proximity to Loop and serves as a proximity-budget parameter. Smaller $C$ also gives a wider Reif-gap margin against off-planar perturbation and serves as a spectral-safety parameter. It is the single user-facing knob in the implementation, and it has a clean interpretation across all three roles.

\subsection{Implementation and scope}
\label{sec:implementation}

The operator is implemented in PyTorch and exposed as a Blender add-on, which applies one or more steps of \PNS{} to a selected mesh. The user-facing parameter is the budget $C$, exposed as a slider. Secondary controls toggle visualisation of the per-edge gate and the proximity ratio. Render-time figures in this paper were produced through an export pipeline that takes the saved mesh assets and produces shaded, signed-distance, and proximity-heatmap renders under shared camera and shared colour management. \PNS{} requires evaluating a small multilayer perceptron per edge in addition to the Loop stencil. The network takes a fixed-size feature vector as input and produces a three-dimensional bounded output. Its evaluation is local, edge-parallel, and independent across edges, so it batches naturally and ports straightforwardly to graphics processing units. The total per-edge overhead over Loop consists of the feature construction, one small matrix-vector product per hidden layer, a single scalar nonlinearity for the gate, and the frame construction from the $1$-ring, all of which are constant-time in the mesh size. The overall cost of \PNS{} scales linearly with the number of interior edges, matching the asymptotic cost of Loop itself.

Several natural questions extend beyond the experiments reported here. On stability under further iteration, the results reach four subdivision levels, and the proximity ratio for \PNS{} decreases with level, consistent with the $\hloc^2$ scaling. There is no reason to expect qualitatively different behaviour at deeper levels, though a dedicated extended-iteration study on reduced-resolution patches, run to face-count exhaustion under a common budget, would give a stronger empirical statement. On non-planar extraordinary vertices, Theorem~\ref{thm:planar-spectral} is exact only at the planar reference, and off-planar behaviour is bounded by the perturbation magnitude and by $C$ but is neither reported nor tightened in the present experiments. A controlled study on valence-$k$ stars with a centre-lift parameter would quantify the tangent-eigenvalue drift and Reif-gap collapse that the theorem does not itself constrain. On generalisation to non-synthetic meshes, the analytic ridge benchmark isolates a controlled failure mode of fixed subdivision stencils and does so at the price of not covering the full range of practical shapes. The architectural guarantees continue to hold on any mesh, since they are properties of the operator rather than of the training data. On computer-aided design patches with closed-form ground truth, the same regime as on the ridge benchmark is expected, since the operator responds to local curvature concentration rather than to the specific analytic form. On sharp discontinuities, on noisy scans, and on broad low-curvature regions, the gate either saturates locally or suppresses the learned correction, and the method remains close to Loop by design.

\subsection{Limitations}
\label{sec:limitations}

\PNS{} is currently limited in several ways worth stating explicitly. The method delivers finite-level improvement rather than a new arbitrary limit surface, since corrections scale as $\hloc^2$ and \PNS{} becomes increasingly Loop-like under repeated refinement. Users seeking a free neural limit-surface generator outside the Loop-proximate regime will not find one here. We prove planar spectral inheritance and provide numerical off-planar certification, but not a global $C^1$ theorem at extraordinary vertices. The current evaluation emphasises localised smooth features because ridges are controlled and analytically measurable. The construction is built around Loop subdivision and triangle connectivity. Catmull--Clark subdivision and quad meshes are a natural extension, as is the $\sqrt{3}$-subdivision scheme of Kobbelt \cite{Kobbelt2000}. The same architectural recipe of bounded gated correction in a covariant frame with $h^2$ scaling should apply, but we have not implemented or tested it. Performance depends on the chosen curvature and shape descriptors, and other invariant feature sets are compatible with the architecture. Boundary handling is limited, since the current implementation routes boundary edges through Loop's boundary midpoint and does not learn a boundary correction. Non-manifold cases lie outside the standing assumptions of Section~\ref{sec:theory}. Finally, the bounded correction trades off against one-step accuracy. Users who need a one-shot reconstruction with no iteration semantics may prefer an unconstrained method.

\section{Conclusion}
\label{sec:conclusion}

We have presented Proximity-Preserving Neural Subdivision, a constrained learned refinement operator that adapts Loop subdivision while remaining inside a subdivision-theoretic proximity envelope. By embedding boundedness, equivariance, affine reproduction, and planar spectral inheritance into the architecture, the method preserves these properties for any finite weights. Training selects a useful member of this constrained operator class. The trained operator improves feature-sensitive approximation on held-out ridge surfaces and remains stable under repeated refinement, while several adaptive baselines and an unconstrained neural baseline, all competitive at one step, leave the Loop-proximity envelope as iterations accumulate. 

The broader point we like to emphasise here is that classical subdivision provides reliability without adaptivity, unconstrained neural refinement provides adaptivity without reliability, and \PNS{} provides adaptive refinement while preserving the structural regime of classical subdivision. Learning, in this view, should not replace classical structure but should be constrained to operate inside it.

\section*{Data availability}
The complete experimental code, including the subdivision scheme and results, is
openly available at \url{https://github.com/ugail/Proximity-Preserving-Neural-Subdivision}.



\begin{thebibliography}{99}

\bibitem{Loop1987}
Loop, C. T. (1987). Smooth subdivision surfaces based on triangles. Master's thesis, Department of
Mathematics, University of Utah.

\bibitem{CatmullClark1978}
Catmull, E., Clark, J. (1978). Recursively generated B-spline surfaces on arbitrary topological meshes.
\textit{Computer-Aided Design}, 10(6), 350--355.
\url{https://doi.org/10.1016/0010-4485(78)90110-0}

\bibitem{DooSabin1978}
Doo, D., Sabin, M. (1978). Behaviour of recursive division surfaces near extraordinary points.
\textit{Computer-Aided Design}, 10(6), 356--360.
\url{https://doi.org/10.1016/0010-4485(78)90111-2}

\bibitem{Reif1995}
Reif, U. (1995). A unified approach to subdivision algorithms near extraordinary vertices.
\textit{Computer Aided Geometric Design}, 12(2), 153--174.
\url{https://doi.org/10.1016/0167-8396(94)00007-F}

\bibitem{PetersReif2008}
Peters, J., Reif, U. (2008). \textit{Subdivision Surfaces}. Geometry and Computing, vol. 3, Springer.
\url{https://doi.org/10.1007/978-3-540-76406-9}

\bibitem{Hoppe1994}
Hoppe, H., DeRose, T., Duchamp, T., Halstead, M., Jin, H., McDonald, J., Schweitzer, J., Stuetzle, W.
(1994). Piecewise smooth surface reconstruction. In \textit{Proceedings of the 21st Annual Conference on
Computer Graphics and Interactive Techniques} (SIGGRAPH), 295--302.
\url{https://doi.org/10.1145/192161.192233}

\bibitem{ZorinSchroederSweldens1996}
Zorin, D., Schr\"oder, P., Sweldens, W. (1996). Interpolating subdivision for meshes with arbitrary
topology. In \textit{Proceedings of the 23rd Annual Conference on Computer Graphics and Interactive
Techniques} (SIGGRAPH), 189--192.
\url{https://doi.org/10.1145/237170.237254}

\bibitem{Kobbelt2000}
Kobbelt, L. (2000). $\sqrt{3}$-subdivision. In \textit{Proceedings of the 27th Annual Conference on
Computer Graphics and Interactive Techniques} (SIGGRAPH), 103--112.
\url{https://doi.org/10.1145/344779.344835}

\bibitem{VlachosPetersBoydMitchell2001}
Vlachos, A., Peters, J., Boyd, C., Mitchell, J. L. (2001). Curved PN triangles. In \textit{Proceedings of
the 2001 Symposium on Interactive 3D Graphics} (I3D), 159--166.
\url{https://doi.org/10.1145/364338.364387}

\bibitem{BoubekeurAlexa2008}
Boubekeur, T., Alexa, M. (2008). Phong tessellation. \textit{ACM Transactions on Graphics}, 27(5),
141:1--141:5.
\url{https://doi.org/10.1145/1409060.1409094}

\bibitem{LiuKimChaudhuriAigermanJacobson2020}
Liu, H.-T. D., Kim, V. G., Chaudhuri, S., Aigerman, N., Jacobson, A. (2020). Neural subdivision.
\textit{ACM Transactions on Graphics}, 39(4), 124:1--124:16.
\url{https://doi.org/10.1145/3386569.3392418}

\bibitem{ChenKimAigermanJacobson2023}
Chen, Y.-C., Kim, V. G., Aigerman, N., Jacobson, A. (2023). Neural progressive meshes. In \textit{ACM
SIGGRAPH 2023 Conference Proceedings}, Article 84, 1--9.
\url{https://doi.org/10.1145/3588432.3591531}

\bibitem{HanockaHertzFishGiryesFleishmanCohenOr2019}
Hanocka, R., Hertz, A., Fish, N., Giryes, R., Fleishman, S., Cohen-Or, D. (2019). MeshCNN: a network with
an edge. \textit{ACM Transactions on Graphics}, 38(4), 90:1--90:12.
\url{https://doi.org/10.1145/3306346.3322959}

\bibitem{HuLiuGuoCaiHuangMuMartin2022}
Hu, S.-M., Liu, Z.-N., Guo, M.-H., Cai, J.-X., Huang, J., Mu, T.-J., Martin, R. R. (2022).
Subdivision-based mesh convolution networks. \textit{ACM Transactions on Graphics}, 41(3), 25:1--25:16.
\url{https://doi.org/10.1145/3506694}

\bibitem{PotamiasPloumpisZafeiriou2022}
Potamias, R. A., Ploumpis, S., Zafeiriou, S. (2022). Neural mesh simplification. In \textit{IEEE/CVF
Conference on Computer Vision and Pattern Recognition} (CVPR), 18562--18571.
\url{https://doi.org/10.1109/CVPR52688.2022.01803}

\bibitem{DynGregoryLevin1987}
Dyn, N., Gregory, J. A., Levin, D. (1987). A 4-point interpolatory subdivision scheme for curve design.
\textit{Computer Aided Geometric Design}, 4(4), 257--268.
\url{https://doi.org/10.1016/0167-8396(87)90001-X}

\bibitem{Taubin1995}
Taubin, G. (1995). A signal processing approach to fair surface design. In \textit{Proceedings of the 22nd
Annual Conference on Computer Graphics and Interactive Techniques} (SIGGRAPH), 351--358.
\url{https://doi.org/10.1145/218380.218473}

\bibitem{Stam1998}
Stam, J. (1998). Exact evaluation of Catmull--Clark subdivision surfaces at arbitrary parameter values.
In \textit{Proceedings of the 25th Annual Conference on Computer Graphics and Interactive Techniques}
(SIGGRAPH), 395--404.
\url{https://doi.org/10.1145/280814.280945}

\bibitem{WarrenWeimer2002}
Warren, J., Weimer, H. (2002). \textit{Subdivision Methods for Geometric Design: A Constructive Approach}.
Morgan Kaufmann.
\url{https://doi.org/10.1016/B978-1-55860-446-9.X5000-5}

\bibitem{PrautzschBoehmPaluszny2002}
Prautzsch, H., B\"ohm, W., Paluszny, M. (2002). \textit{B'ezier and B-Spline Techniques}. Mathematics and
Visualization, Springer.
\url{https://doi.org/10.1007/978-3-662-04919-8}

\bibitem{DynLevin2002}
Dyn, N., Levin, D. (2002). Subdivision schemes in geometric modelling. \textit{Acta Numerica}, 11,
73--144.
\url{https://doi.org/10.1017/S0962492902000028}

\bibitem{BloorWilson1989}
Bloor, M. I. G., Wilson, M. J. (1989). Generating blend surfaces using partial differential equations.
\textit{Computer-Aided Design}, 21(3), 165--171.
\url{https://doi.org/10.1016/0010-4485(89)90071-7}

\bibitem{BloorWilson1990}
Bloor, M. I. G., Wilson, M. J. (1990). Using partial differential equations to generate free-form surfaces.
\textit{Computer-Aided Design}, 22(4), 202--212.
\url{https://doi.org/10.1016/0010-4485(90)90049-I}

\bibitem{UgailBloorWilson1999a}
Ugail, H., Bloor, M. I. G., Wilson, M. J. (1999). Techniques for interactive design using the PDE method.
\textit{ACM Transactions on Graphics}, 18(2), 195--212.
\url{https://doi.org/10.1145/318009.318078}

\bibitem{UgailBloorWilson1999b}
Ugail, H., Bloor, M. I. G., Wilson, M. J. (1999). Manipulation of PDE surfaces using an interactively defined parameterisation. \textit{Computers \& Graphics}, 23(4), 525--534.
\url{https://doi.org/10.1016/S0097-8493(99)00071-0}

\bibitem{KubiesaUgailWilson2004}
Kubiesa, S., Ugail, H., Wilson, M. J. (2004). Interactive design using higher order PDEs.
\textit{The Visual Computer}, 20(10), 682--693.
\url{https://doi.org/10.1007/s00371-004-0261-3}

\bibitem{AthanasopoulosUgailCastro2009}
Athanasopoulos, M., Ugail, H., Gonz\'alez Castro, G. (2009). Parametric design of aircraft geometry using
partial differential equations. \textit{Advances in Engineering Software}, 40(7), 479--486.
\url{https://doi.org/10.1016/j.advengsoft.2008.08.001}

\bibitem{ShengSourinCastroUgail2010}
Sheng, Y., Sourin, A., Gonz\'alez Castro, G., Ugail, H. (2010). A PDE method for patchwise approximation
of large polygon meshes. \textit{The Visual Computer}, 26(6--8), 975--984.
\url{https://doi.org/10.1007/s00371-010-0456-8}

\bibitem{ShengWillisCastroUgail2011}
Sheng, Y., Willis, P., Gonz\'alez Castro, G., Ugail, H. (2011). Facial geometry parameterisation based on
partial differential equations. \textit{Mathematical and Computer Modelling}, 54(5--6), 1536--1548.
\url{https://doi.org/10.1016/j.mcm.2011.04.025}

\bibitem{MonterdeUgail2004}
Monterde, J., Ugail, H. (2004). On harmonic and biharmonic B'ezier surfaces.
\textit{Computer Aided Geometric Design}, 21(7), 697--715.
\url{https://doi.org/10.1016/j.cagd.2004.07.003}

\bibitem{MonterdeUgail2006}
Monterde, J., Ugail, H. (2006). A general 4th-order PDE method to generate B'ezier surfaces from the
boundary. \textit{Computer Aided Geometric Design}, 23(2), 208--225.
\url{https://doi.org/10.1016/j.cagd.2005.09.001}

\bibitem{GonzalezCastroUgailWillisPalmer2008}
Gonz\'alez Castro, G., Ugail, H., Willis, P., Palmer, I. (2008). A survey of partial differential equations
in geometric design. \textit{The Visual Computer}, 24(3), 213--225.
\url{https://doi.org/10.1007/s00371-007-0190-z}

\bibitem{WallnerDyn2005}
Wallner, J., Dyn, N. (2005). Convergence and $C^1$ analysis of subdivision schemes on manifolds by
proximity. \textit{Computer Aided Geometric Design}, 22(7), 593--622.
\url{https://doi.org/10.1016/j.cagd.2005.06.003}

\bibitem{Grohs2009}
Grohs, P. (2008). Smoothness analysis of subdivision schemes on regular grids by proximity.
\textit{SIAM Journal on Numerical Analysis}, 46(4), 2169--2182.
\url{https://doi.org/10.1137/060669759}

\bibitem{Grohs2010}
Grohs, P. (2010). Approximation order from stability of nonlinear subdivision schemes.
\textit{Journal of Approximation Theory}, 162(5), 1085--1094.
\url{https://doi.org/10.1016/j.jat.2009.12.003}

\bibitem{HueningWallner2019}
H\"uning, S., Wallner, J. (2019). Convergence of subdivision schemes on Riemannian manifolds with
nonpositive sectional curvature. \textit{Advances in Computational Mathematics}, 45(3), 1689--1709.
\url{https://doi.org/10.1007/s10444-019-09693-x}

\bibitem{HueningWallner2022}
H\"uning, S., Wallner, J. (2022). Convergence analysis of subdivision processes on the sphere.
\textit{IMA Journal of Numerical Analysis}, 42(1), 698--711.
\url{https://doi.org/10.1093/imanum/draa086}

\bibitem{UgailCarriazo2026}
Ugail, H., Carriazo, A. (2026). A Heisenberg subdivision scheme with central smoothness loss.
arXiv:2607.05446.
\url{https://doi.org/10.48550/arXiv.2607.05446}

\bibitem{Hoppe1996}
Hoppe, H. (1996). Progressive meshes. In \textit{Proceedings of the 23rd Annual Conference on Computer
Graphics and Interactive Techniques} (SIGGRAPH), 99--108.
\url{https://doi.org/10.1145/237170.237216}

\bibitem{MasciBoscainiBronsteinVandergheynst2015}
Masci, J., Boscaini, D., Bronstein, M. M., Vandergheynst, P. (2015). Geodesic convolutional neural
networks on Riemannian manifolds. In \textit{IEEE International Conference on Computer Vision Workshops}
(ICCV Workshops), 832--840.
\url{https://doi.org/10.1109/ICCVW.2015.112}

\bibitem{CohenWeilerKicanaogluWelling2019}
Cohen, T. S., Weiler, M., Kicanaoglu, B., Welling, M. (2019). Gauge equivariant convolutional networks
and the icosahedral CNN. In \textit{Proceedings of the 36th International Conference on Machine Learning}
(ICML), Proceedings of Machine Learning Research, vol.\ 97, 1321--1330.

\bibitem{deHaanWeilerCohenWelling2021}
de Haan, P., Weiler, M., Cohen, T., Welling, M. (2021). Gauge equivariant mesh CNNs: anisotropic
convolutions on geometric graphs. In \textit{International Conference on Learning Representations}
(ICLR).

\bibitem{FeyLenssenWeichertMueller2018}
Fey, M., Lenssen, J. E., Weichert, F., M\"uller, H. (2018). SplineCNN: fast geometric deep learning with
continuous B-spline kernels. In \textit{IEEE/CVF Conference on Computer Vision and Pattern Recognition}
(CVPR), 869--877.
\url{https://doi.org/10.1109/CVPR.2018.00097}

\bibitem{QiSuMoGuibas2017}
Qi, C. R., Su, H., Mo, K., Guibas, L. J. (2017). PointNet: deep learning on point sets for 3D
classification and segmentation. In \textit{IEEE Conference on Computer Vision and Pattern Recognition}
(CVPR), 77--85.
\url{https://doi.org/10.1109/CVPR.2017.16}

\bibitem{QiYiSuGuibas2017}
Qi, C. R., Yi, L., Su, H., Guibas, L. J. (2017). PointNet++: deep hierarchical feature learning on point
sets in a metric space. In \textit{Advances in Neural Information Processing Systems} (NeurIPS), vol.\ 30,
5099--5108.

\bibitem{WangSunLiuSarmaBronsteinSolomon2019}
Wang, Y., Sun, Y., Liu, Z., Sarma, S. E., Bronstein, M. M., Solomon, J. M. (2019). Dynamic graph CNN for
learning on point clouds. \textit{ACM Transactions on Graphics}, 38(5), 146:1--146:12.
\url{https://doi.org/10.1145/3326362}

\bibitem{DefferrardBressonVandergheynst2016}
Defferrard, M., Bresson, X., Vandergheynst, P. (2016). Convolutional neural networks on graphs with fast
localized spectral filtering. In \textit{Advances in Neural Information Processing Systems} (NeurIPS),
vol.\ 29, 3844--3852.

\bibitem{BotschKobbeltPaulyAlliezLevy2010}
Botsch, M., Kobbelt, L., Pauly, M., Alliez, P., L\'evy, B. (2010). \textit{Polygon Mesh Processing}.
A K Peters / CRC Press.
\url{https://doi.org/10.1201/b10688}

\bibitem{Meyer2003}
Meyer, M., Desbrun, M., Schr\"oder, P., Barr, A. H. (2003). Discrete differential-geometry operators for
triangulated 2-manifolds. In \textit{Visualization and Mathematics III}, Hege, H.-C., Polthier, K.\ (Eds.),
Springer, 35--57.
\url{https://doi.org/10.1007/978-3-662-05105-4_2}

\bibitem{Desbrun2000}
Desbrun, M., Meyer, M., Schr\"oder, P., Barr, A. H. (2000). Discrete differential-geometry operators in nD.
Technical report, California Institute of Technology.

\bibitem{KingmaBa2015}
Kingma, D. P., Ba, J. (2015). Adam: a method for stochastic optimization. In \textit{International
Conference on Learning Representations} (ICLR).

\bibitem{GarlandHeckbert1997}
Garland, M., Heckbert, P. S. (1997). Surface simplification using quadric error metrics. In
\textit{Proceedings of the 24th Annual Conference on Computer Graphics and Interactive Techniques}
(SIGGRAPH), 209--216.
\url{https://doi.org/10.1145/258734.258849}

\end{thebibliography}
\end{document}